\documentclass{article}
\usepackage{graphicx} 
\usepackage{amsmath,amsthm,amssymb,mathtools}
\usepackage{bm}
\usepackage{hyperref}
\usepackage{natbib}
\usepackage{subcaption}
\usepackage{booktabs}
\numberwithin{equation}{section}
\usepackage{authblk}

\theoremstyle{plain}
\newtheorem{thm}{Theorem}

\theoremstyle{definition}
\newtheorem{assump}{Assumption}
\newtheorem{remark}{Remark}

\title{Inverse Probability Weighting in a Post-Bayesian World}

\author[1]{Owen Thomas}
\author[1,3]{William Denault}
\author[2,3]{Valeria Vitelli}
\affil[1]{Oslo Centre for Biostatistics and Epidemiology, Oslo University Hospital, Norway}
\affil[2]{Oslo Centre for Biostatistics and Epidemiology, Department of Biostatistics, University of Oslo, Norway}
\affil[3]{Integreat, the Norwegian Centre for Knowledge-driven Machine Learning, Norway}

\date{\today}

\begin{document}

\maketitle

\begin{abstract}
We present a justification of the use of Inverse Probability Weighting (IPW) in a post-Bayesian framework, in which the bias-correction provided by IPW in a frequentist context is reframed as a reweighting of the Kullback-Leibler (KL) divergence between the statistical model and the true data-generating parameter value. We provide a coherent argument in support of this approach, including theoretical results concerning convergence and properties of the generalised belief posteriors. We present examples demonstrating the utility of post-Bayesian IPW in practice: these include two simulated examples of inference under selection bias in the observed data, and a large-scale real-data example concerning systematic biases present in registry data when using prostate-specific antigen (PSA) to predict prostate cancer mortality. The empirical and theoretical results together show the utility of IPW to address classes of problems previously intractable within a Bayesian approach.
\end{abstract}

\section{Introduction}

In this article, we motivate the use of Inverse Probability Weighting\footnote{Also elsewhere referred to as importance weighting, importance reweighting, Horvitz-Thompson weighting, and other wordings.} (IPW) for Generalised Bayesian inference as a tool for redefining the target inference distribution from the data generating process (d.g.p.) $G$ to an arbitrary, user-specified distribution $Q$. IPW has broad applications across many statistical challenges, including causal inference, missing data, and selection bias, among others. We pursue a generic approach here for introducing IPW into the Bayesian framework, while maintaining a running application example of selection bias: The motivation to redefine the data generating process is clear when dealing with instances of selection bias, where the observed data are drawn from a distribution that is different to the idealised population, introducing a bias into the resulting posterior estimates. This situation is common in contemporary heterogeneous data environments: data quality comes at a cost, such that large-scale, cheaply acquired, potentially ``Big N'' is available, in contrast to smaller-scale, bespoke, reduced bias ``Small N'' sets. One concrete example is in the world of biomedical research, which uses both population-scale registry data and clinical-scale patient data.

The paradigm in this article assumes that the researcher has access to a large-scale ``Big N'' data set $X$ drawn from a biased distribution $G$ with density $g$, and a small-scale ``Small N'' data set $X'$ drawn from an unbiased distribution $Q$ with density $q$. We advocate using a classifier trained to distinguish these two data sets to estimate the ratio $r_i = q(X_i)/g(X_i)$ for each data point $X_i$, which are then used as weighs $w_i$ to reweight the likelihood in a Bayesian update using $X$ under some assumed statistical likelihood $p(X|\theta)$. We show that this is equivalent to redefining the minimising Kullback-Leibler (KL) divergence of the Bayesian update from $KL(g(.)||p(.|\theta))$ to $KL(q(.)||p(.|\theta))$, i.e. projecting the inference procedure to target $Q$ instead of the biased $G$, under stated conditions. We demonstrate proofs for posterior convergence to the new value of $\theta^*$, and conditions on $G$ and $Q$ for successful ratio estimation. Empirical examples are provided for simulated and real data sets, and the further relevance of the work is discussed.

\section{Background}

Selection bias has been handled in a frequentist context through the use of IPW, in which estimators of quantities of interest are debiased through reweighting the data points present in their formulae for the sake of removing known biases. For example, when estimating the mean $\mu$ of data set $X$ with finite expectation in a target population $P_1$ from data collected in a different source population $P_0$, we would be potentially exposed to selection bias: this may arise when one attempts estimation and inference on a target population using data collected on a systematically different population. In such an instance, it is common to instead compute:

\begin{equation}
    \hat{\mu}_{IPWE} = \frac{1}{n} \sum_i X_i w_i
\end{equation}

where weights $w_i=P_1(X_i)/P_0(X_i)$ are computed to counteract the bias present in the standard estimate, often derived from logistic regression models.

IPW is rarely used in Bayesian inference as it has been difficult to justify the implied modification to Bayes rule, since it appears to contradict the valuable properties that conventional Bayesian inference exhibits \citep{robins2015discussion}. In particular, the weights are not estimated from a pure likelihood or prior component, which are the basic inputs in a fully-Bayesian inferential procedure. However, in recent years the literature on generalised Bayesian inference has developed rapidly, so that the various assumptions underpinning Bayes' rule have been relaxed and the implications for inference procedures have been explored \citep{knoblauch2022optimization,miller2019robust}. This has often been justified due to the ever-present question of model specification in Bayes, but we would like to expand the applicability of generalised Bayes to new sources of inferential bias, such as selection bias.

Bayes' rule has been generalised in various ways, from the initial use of power posteriors, in which the likelihood is raised to a scalar valued shared for all data points \citep{grunwald2012safe}, to more flexible usage of generic losses and scoring rules beyond the negative log likelihood, following the groundbreaking work of \citep{bissiri2016general}. It is known that conventional Bayes rules implicitly minimize the KL-divergence between the true data generating process and the assumed statistical model \citep{walker2013bayesian}: however, the KL is known to place large amounts of weight on outliers in the distributions, leading to potential instability in the resulting posteriors. Consequently, divergences other than the KL have been explored for performing Bayesian updates, including the Total Variational Distance, the Hellinger divergence, and more general classes of $f$-divergences \citep{jewson2018principles}. The Bayesian Bootstrap \citep{rubin1981bayesian} involving the random reweighting of data points in the likelihood has also been advocated in this context \citep{lyddon2019general}.

The generalised Bayes paradigm has also caught on in the simulation-based inference community, which often works with suboptimally-specified simulator models, and for which exact inference using traditional Bayesian updates has been impossible due to the intractability of the likelihood function. New inference procedures in such contexts have often relied on scoring-rule-based posterior \citep{pacchiardi2024generalized}, misspecification-robust discrepancies using simulations from the statistical model \citep{thomas2025misspecification}, focusing on the predictive properties of the belief distributions \citep{loaiza2021focused}, or the Maximum Mean Discrepancy bootstrap \citep{dellaporta2022robust}, among others \citep{matsubara2022robust}. The use of classifiers to perform estimation of density ratios of interest has also become common: these have included approximating the update ratio present in Bayes rule \citep{thomas2022likelihood}, to training full belief distributions using neural posterior estimation \citep{cranmer2020frontier}.

We see that there is a wealth of adaptations of Bayes' theorem presently being developed. The suggested motto that ``There is no Bayes but Bayes'' \citep{robins2015discussion}, previously cited as constraining the general applicability of Bayesian methods, has never sounded less defensible. This article unifies the above research directions with the purpose of training classifiers to approximate the ratios appropriate for the reweighting of data points' contributions to the likelihood when performing a Bayesian update in the presence of selection bias.

This article will be structured as follows. The Methods Section 3 will include brief introductions to relevant concepts from the literature, followed by a description and justification of the methodology proposed. Theoretical results will be presented in Section 4. Results will be presented in Section 5, included two simulated examples and a real data epidemiological analysis. Sections 6, 7, and 8 will discuss, conclude, and acknowledge, respectively. The appendices contain some pragmatic advice concerning using the ratio estimation in practice, and assumptions and proofs supporting the theoretical results.

\section{Methods}

\subsection{Ratio Estimation}

We assume a problem specification in which we wish to perform a Bayesian update with a large amount of $n$ data points $X$ drawn from a distribution $G$ with density $g$ that potentially exhibits bias relative to an idealised target distribution $Q$ with density $q$ from which a smaller number $m$ of higher quality data points $X'$ are available. Classifiers have been used successfully to approximate ratios of distributions using their samples in many inference contexts: here, we use them to approximate the ratio of $r(X) = q(X)/g(X)$ using the observed data sampled from each distribution, and without either of the distributions themselves necessarily being directly tractable. This is performed in a cross-validated manner to allow for the evaluation of the ratio on all of the data $X$ without reusing data between training and test: the subsequent evaluated ratios: $w_i = q(X_i)/g(X_i)$ consequently reflect the odds of each individual data point $X_i$ belonging to the idealised unbiased population relative to the biased population, i.e. the single-data-point risk ratio of selection bias.

\subsection{Generalised Bayes and Reweighting}
It is known that a conventional Bayesian update for a model $p(X|\theta)$ trained on data points $X$ drawn from a true d.g.p. with density $g$ is known to converge on the point in parameter space $\theta$ that minimises the Kullback Leibler divergence between the d.g.p. and the statistical model:

\begin{align}
    KL(g(.)||p(.|\theta)) &= \int g(X) \log \frac{g(X)}{p(X|\theta)} dX \notag\\
    &= -  \int g(X) \log p(X|\theta) dX + c \notag\\
    &= \mathbb{E}_G\big[  -\ell_\theta(X) \big]+ c \notag\\
    & = \mathcal L(\theta;w)+ c
\end{align}

where $c$ is a constant that does not depend on $\theta$. In the final line above, the minimisation target is recast as the expectation of the negative log-likelihood $ \ell_\theta(X)$ of the statistical model under the data generating process $G$. This provides intuition as to why the distribution $G$ acts as the target of the inference, even when it differs from the statistical model. This can be estimated in the finite data sample using the empirical risk $\mathcal L_n(\theta)=-\frac{1}{n}\sum_{i=1}^n \,\ell_\theta(X_i)$, i.e. the sum over the negative log likelihood evaluated at the observed data under the model.

The key idea of our proposed method comes from generating a population-bias-robust posterior $\pi_n(\theta)$ by combining a prior $\pi(\theta)$ and  likelihood $p(X|\theta)$, with the contribution of each individual data point $X_i$ reweighted with the weights $w_i$:

\begin{equation}
    \pi_n(\theta) = \prod_{i=1}^N p(X_i|\theta)^{w_i}\pi(\theta)
\end{equation}

or equivalently, on the log scale:

\begin{align}
    \log \pi_n(\theta) &= \log \pi(\theta)  + \sum_{i=1}^n  w_i \ell_\theta(X_i) \notag\\
    &= \log \pi(\theta)  - n \mathcal L_n(\theta;w) 
\end{align}

The new reweighted empirical risk $\mathcal L_n(\theta;w) = \frac{1}{n}\sum_{i=1}^n w(X_i)\,(-\ell_\theta(X_i))$ is a function of the chosen weights $w_i$: we propose classifier-derived estimates of $r(X)=q(X)/g(X)$ evaluated at the observed data, reflecting the appropriate strength of contribution to the likelihood of each data point given its risk of selection bias, i.e.:
\begin{equation}
    \log \pi_n(\theta) = \log \pi(\theta)  + \sum_{i=1}^n  \frac{q(X_i)}{g(X_i)} \ell_\theta(X_i)
\end{equation}

This specific choice of weights for the empirical risk targets a population objective $\mathcal L(\theta;w)$ reweighted with $r(X) = \frac{q(X)}{g(X)}$ and hence the new KL:

\begin{align}
\mathcal L(\theta;w) &= \mathbb{E}_G\big[r(X)\,(-\ell_\theta(X))\big]\notag\\
& = - \mathbb{E}_G\big[ \frac{q(X)}{g(X)} \ell_\theta(X)\big]\notag \\
& = - \int g(X) \frac{q(X)}{g(X)}  \ell_\theta(X) dX \notag\\
& = - \int q(X) \ell_\theta(X) dX\notag \\
& = - \mathbb{E}_Q\big[\ell_\theta(X)\big]\notag\\
& =  KL(q(.)||p(.|\theta))  + c
\end{align}

Consequently, the likelihood reweighted with appropriately estimated $w_i = q(X_i)/g(X_i)$ has the effect of swapping the inferential target from the observed, biased data-generating process $G$ to the desired, unbiased distribution $Q$. The importance-sampling style algebraic operation has been performed to exchange the expectation under $G$ for an expectation under $Q$ via an appropriately chosen ratio. We have thus recast the inference as being directed towards our idealised target distribution $Q$, with the full confidence of the larger data set $X$.

\subsection{Effective Data Size (EDS)}

The question remains as to how powerful the update will be if the data points are reweighted to have varying degrees of influence: fortunately, we can use the related concept of the effective sample size from the Monte Carlo sampling literature, defined as $EDS = (\sum_{i=1}^n w_i)^2/\sum_{i=1}^n w_i^2$ \citep{kish1965survey}. This can be reintepreted at the data level: if the data generating distribution and bias-free distributions are equal, then a perfectly-trained classifier would identify all of the data points to be equally relevant and we would recover an exact Bayesian update with an effective data size of $n$. However, if selection bias is severe in the sample, then the classifier may only select a small number of data points as being relevant for the update, and the effect data size will consequently shrink, as the update will be dominated by the small number of observed overlapping data points. Fortunately, this phenomenon will be identifiable from the computed weights and their derived statistics.

In Appendix \ref{app:classprag} we provide some further pragmatic advice on how to generate stable populations of weights for subsequent use in the belief update.

\section{Theoretical Properties}

Here we provide some theoretical results with proofs under notation and assumptions in Appendices \ref{App:setup} and \ref{app:assump} respectively, standard for M-estimation analysis.

\begin{thm}[Existence and uniqueness]\label{thm:exist} Assume \ref{A:param}, \ref{A:env}, and \ref{A:GC}. Then the map $\theta\mapsto \mathcal L(\theta;w)$ is continuous on $N$. If $K\subset N$ is compact, then $\mathcal L(\cdot;w)$ attains a minimum on $K$. If, in addition, $\Theta$ is closed and $\mathcal L(\cdot;w)$ is coercive (i.e.\ $\|\theta\|\to\infty$ implies $\mathcal L(\theta;w)\to\infty$), then $\mathcal L(\cdot;w)$ attains a global minimum on $\Theta$. If, furthermore, $\mathcal L(\cdot;w)$ is strictly convex on $\Theta$, then this minimiser is unique. \end{thm}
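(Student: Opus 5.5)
The plan is to prove the four claims in order, each leaning on the previous one, with the real work concentrated in the continuity claim. Assumptions \ref{A:param}, \ref{A:env} and \ref{A:GC} are not stated here. I will read them in their standard M-estimation form:
\begin{itemize}
\item \ref{A:param}: $\Theta\subset\mathbb{R}^d$, $N\subset\Theta$ is an open neighbourhood (of the candidate minimiser), and for $G$-almost every $x$ the map $\theta\mapsto \ell_\theta(x)$ is continuous on $N$.
\item \ref{A:env}: there is an envelope $F$ with $\sup_{\theta\in N}|r(x)\ell_\theta(x)|\le F(x)$ and $\mathbb{E}_G[F]<\infty$.
\item \ref{A:GC}: the class $\{r\,\ell_\theta:\theta\in N\}$ is $G$-Glivenko--Cantelli.
\end{itemize}

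\textbf{Continuity.} Fix $\theta_0\in N$ and a sequence $\theta_k\to\theta_0$ in $N$. By \ref{A:param}, $r(X)\ell_{\theta_k}(X)\to r(X)\ell_{\theta_0}(X)$ for $G$-almost every $X$, because the weight $r$ does not depend on $\theta$. By \ref{A:env}, all of these functions are dominated by the integrable $F$. Dominated convergence then gives $\mathcal L(\theta_k;w)\to\mathcal L(\theta_0;w)$. Sequential continuity suffices in a metric space, so the map is continuous on $N$. Assumption \ref{A:GC} is not needed for this step; I will mention it only to note that the envelope also makes $\mathcal L(\theta;w)$ finite everywhere on $N$.

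\textbf{Existence on compacts and globally.} A continuous real-valued function on a compact $K\subset N$ attains its minimum by Weierstrass, which gives the second claim. For the global claim, fix any $\theta_1$ and set $c=\mathcal L(\theta_1;w)$. Coercivity provides a radius $R$ such that $\mathcal L(\theta;w)>c$ whenever $\|\theta\|>R$. The sublevel set $S=\{\theta\in\Theta:\mathcal L(\theta;w)\le c\}$ is therefore contained in $\Theta\cap \bar B(0,R)$. That set is closed and bounded, hence compact in $\mathbb{R}^d$, since $\Theta$ is closed.

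Applying Weierstrass on $\Theta\cap\bar B(0,R)$ requires continuity on all of $\Theta$, not just on $N$. This is the step I expect to be the main obstacle. I would handle it in one of two ways:
\begin{itemize}
\item interpret the hypotheses so that \ref{A:param} and \ref{A:env} hold on all of $\Theta$, or at least on a neighbourhood containing the relevant sublevel set (equivalently, take $N\supseteq\Theta$ for this part); or
\item use lower semicontinuity via Fatou's lemma, which needs only a one-sided integrable lower bound on $-r\ell_\theta$. Lower semicontinuity is enough for a minimum to be attained on a compact set.
\end{itemize}
I will state this explicitly. Once it is in place, the minimiser of $\mathcal L$ over $\Theta\cap\bar B(0,R)$ has value at most $c$. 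Every point outside that ball has value greater than $c$, so this minimiser is a global minimiser on $\Theta$.

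\textbf{Uniqueness.} Strict convexity presupposes that $\Theta$ is convex, and I will note this as an implicit assumption. Suppose $\theta_a\neq\theta_b$ are both global minimisers with common value $m$. Strict convexity gives
\[
\mathcal L\bigl(\tfrac12(\theta_a+\theta_b);w\bigr)<\tfrac12 m+\tfrac12 m=m,
\]
and the midpoint lies in $\Theta$ by convexity. This contradicts minimality, so the minimiser is unique.
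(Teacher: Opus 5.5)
Your proof follows the paper's route: dominated convergence for continuity on $N$, Weierstrass on compacts, coercivity to confine a sublevel set to a compact set, and the midpoint contradiction under strict convexity. The one correction is about where the envelope comes from. The paper's Assumption \ref{A:env} does not bound $w\ell_\theta$ at all. It bounds only scores and Hessians, $\|s_\theta\|,\|H_\theta\|\le M$ on $N$ with $\mathbb{E}_G[wM]<\infty$. The paper's proof instead takes the integrable envelope for $\{w\ell_\theta:\theta\in N\}$ from Assumption \ref{A:GC}. Your statement that Assumption \ref{A:GC} is not needed for continuity is therefore backwards relative to the actual hypotheses. Alternatively, the real Assumption \ref{A:env} suffices on its own. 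By the mean value theorem on a ball $B\subset N$, $|w\ell_\theta-w\ell_{\theta'}|\le wM\|\theta-\theta'\|$ pointwise. Hence $\mathcal L(\cdot;w)$ is Lipschitz on $B$ with constant $\mathbb{E}_G[wM]$ as soon as it is finite at one point of $B$ (e.g.\ $\theta^\star$), and no envelope for $w\ell_\theta$ itself is needed.

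Your concern about the global step is well founded, and it is not a defect relative to the paper. The paper's proof also establishes continuity only on $N$ and then cites a ``standard argument''. An extra hypothesis like yours is therefore genuinely needed: continuity or lower semicontinuity of $\mathcal L(\cdot;w)$ on the relevant sublevel set. Note that the Fatou route needs more than the one-sided integrable bound; it also requires pointwise upper semicontinuity of $\theta\mapsto\ell_\theta(x)$ outside $N$. Two further points. Assumption \ref{A:param} takes $\Theta$ open, so adding ``closed'' forces $\Theta=\mathbb{R}^d$, and your implicit convexity of $\Theta$ holds automatically. Also, Appendix \ref{App:setup} defines $\theta^\star$ as a global minimiser, with $N$ a neighbourhood of it. So within the paper's own setup, the global-existence clause is assumed rather than proved.
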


See Appendix \ref{app:exist} for proof and further details: the reweighted population risk has a unique minimiser under conditions on the reweighted loss.

\begin{thm}[Posterior concentration]\label{thm:consistency} Assume \ref{A:param}, \ref{A:env}, \ref{A:GC}, \ref{A:ident}, and \ref{A:prior}. Let \[ \pi_n(d\theta) \;\propto\; \pi(\theta)\exp\{-n\mathcal L_n(\theta;w)\}\,d\theta. \] Then for any $\varepsilon>0$, \[ \pi_n\big(\{\theta:\|\theta-\theta^\star\|>\varepsilon\}\big) \xrightarrow{P} 0. \] \end{thm}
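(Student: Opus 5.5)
The plan is the standard Schwartz/Gibbs-posterior argument for M-estimators, applied to the reweighted empirical risk $\mathcal L_n(\cdot;w)$. I take the assumptions to supply the following.
\begin{itemize}
\item \ref{A:GC} gives a uniform law of large numbers: $\sup_{\theta\in N}|\mathcal L_n(\theta;w)-\mathcal L(\theta;w)|\xrightarrow{P}0$ on a neighbourhood or compact set $N$ containing $\theta^\star$.
\item \ref{A:env} gives an integrable envelope for $r(X)\,\ell_\theta(X)$, and \ref{A:param} gives continuity in $\theta$.
\item \ref{A:ident} gives well-separatedness: for every $\varepsilon>0$ there is $\delta>0$ with $\inf_{\|\theta-\theta^\star\|>\varepsilon}\mathcal L(\theta;w)\ge \mathcal L(\theta^\star;w)+2\delta$. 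This includes control outside $N$, e.g.\ via coercivity/convexity as in Theorem \ref{thm:exist}, or via a uniform lower bound on $\mathcal L_n$ off a compact set.
\item \ref{A:prior} gives positive prior mass on every neighbourhood of $\theta^\star$.
\end{itemize}

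\textbf{Step 1 (reduction to a ratio).} Write
\[
\pi_n(A_\varepsilon)=\frac{\int_{A_\varepsilon}\pi(\theta)e^{-n(\mathcal L_n(\theta;w)-\mathcal L_n(\theta^\star;w))}d\theta}{\int\pi(\theta)e^{-n(\mathcal L_n(\theta;w)-\mathcal L_n(\theta^\star;w))}d\theta},
\qquad A_\varepsilon=\{\|\theta-\theta^\star\|>\varepsilon\}.
\]
Centering at $\theta^\star$ makes numerator and denominator comparable. Fix $\delta$ from \ref{A:ident} and work on the event $E_n=\{\sup_\theta|\mathcal L_n-\mathcal L|<\delta/2\}$, which has probability tending to one by \ref{A:GC}.

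\textbf{Step 2 (numerator).} On $E_n$, for $\theta\in A_\varepsilon$,
\[
\mathcal L_n(\theta;w)-\mathcal L_n(\theta^\star;w)\ge 2\delta-\delta=\delta.
\]
Since $\pi$ is a probability measure, the numerator is therefore at most $e^{-n\delta}$.

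\textbf{Step 3 (denominator).} By continuity of $\mathcal L(\cdot;w)$ (Theorem \ref{thm:exist}), choose a ball $B_\eta$ around $\theta^\star$ on which $\mathcal L(\theta;w)-\mathcal L(\theta^\star;w)<\delta/4$. On $E_n$, and after shrinking the uniform-deviation tolerance to $\delta/8$ (which still holds with probability tending to one), the centered empirical risk is below $\delta/2$ on $B_\eta$. The denominator is then at least $\pi(B_\eta)e^{-n\delta/2}$, and $\pi(B_\eta)>0$ by \ref{A:prior}.

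\textbf{Step 4 (conclusion).} Combining Steps 2 and 3,
\[
\pi_n(A_\varepsilon)\le \pi(B_\eta)^{-1}e^{-n\delta/2}\to0
\]
on an event of probability tending to one. This gives convergence in probability.

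The main obstacle is the part of $A_\varepsilon$ lying outside the set where the uniform law of large numbers holds. A Glivenko–Cantelli property for the weighted class $\{r\,\ell_\theta\}$ is typically only available on a compact set, and the weights $r=q/g$ may be heavy-tailed, which is why \ref{A:env} is needed. My first approach is to split $A_\varepsilon$ into $A_\varepsilon\cap K$ and $K^c$ for a large compact $K$. On $K^c$ I would use a lower bound $\inf_{\theta\notin K}\mathcal L_n(\theta;w)\ge\mathcal L(\theta^\star;w)+2\delta$ with probability tending to one, obtained from the coercivity in Theorem \ref{thm:exist}. If $\mathcal L_n$ is convex in $\theta$, this follows directly from convexity: values on the sphere $\partial K$ control values outside $K$. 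Otherwise I would state this lower bound as part of \ref{A:ident}.

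A secondary point is that the weights are classifier estimates $\hat w_i$, not the true $r(X_i)$. I would treat the theorem with the true $r$, or assume $\frac1n\sum_i|\hat w_i-r(X_i)|\sup_\theta|\ell_\theta(X_i)|\xrightarrow{P}0$. Under that assumption the same argument goes through, because only the uniform approximation $\sup_\theta|\mathcal L_n-\mathcal L|\to0$ was used.
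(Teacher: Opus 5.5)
Your proposal is correct and follows the same route as the paper's proof: uniform convergence of $\mathcal L_n(\cdot;w)$ on $N$, a separation gap for $\mathcal L(\cdot;w)$ away from $\theta^\star$, positive prior mass on a small ball, and an exponentially decaying ratio of far-region to near-ball Gibbs mass. Your bookkeeping is in fact tighter than the paper's, whose near-ball step uses the false inequality $\mathcal L(\theta;w)\le\mathcal L(\theta^\star;w)$ and leaves numerator and denominator with the identical exponent $-n(\mathcal L_n(\theta^\star;w)+2\eta)$, so its claimed factor $e^{-2n\eta}$ does not follow; your continuity-chosen $B_\eta$ with the shrunken $\delta/8$ tolerance produces a genuine gap between the exponents, and the paper also applies its $N$-restricted bound to all of $\{\|\theta-\theta^\star\|\ge\varepsilon\}$ without comment, which is exactly the outside-$N$ issue you rightly flag as needing convexity or an added assumption.
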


See Appendix \ref{app:consistency} for proof and further details: the inference procedure converges to the minimiser of the population risk asymptotically.

\begin{thm}[Asymptotic normality and local posterior Gaussianity]\label{thm:asymp} Assume \ref{A:param}, \ref{A:env}, \ref{A:GC}, \ref{A:CLT}, \ref{A:ident}, and \ref{A:prior}. In addition, assume that the class $\{x\mapsto w(x)H_\theta(x):\theta\in N\}$ is $G$-Glivenko--Cantelli, so that \[ \sup_{\theta\in N} \|\mathbb{P}_n[w H_\theta] - \mathbb{E}_G[w H_\theta]\| \xrightarrow{P} 0. \] Let $\hat\theta_n$ be a sequence of estimators such that $\hat\theta_n\in N$ with probability tending to $1$ and \[ \nabla_\theta\mathcal L_n(\hat\theta_n;w) = 0, \qquad \hat\theta_n\xrightarrow{P}\theta^\star. \] Then \[ \sqrt{n}(\hat\theta_n-\theta^\star) \ \xrightarrow{d}\ \mathcal N(0,I^{-1}JI^{-1}), \] where \[ I = \mathbb{E}_G[w(X)H_{\theta^\star}(X)], \qquad J = \mathbb{E}_G[w(X)^2 s_{\theta^\star}(X)s_{\theta^\star}(X)^\top]. \] Moreover, the generalised posterior satisfies, for any fixed $M>0$, \[ \sup_{B\subset\{u:\|u\|\le M\}} \left| \pi_n\big(\sqrt{n}(\theta-\hat\theta_n)\in B\big) - \mathcal N(0,I^{-1})(B) \right| \xrightarrow{P} 0.
\] \end{thm}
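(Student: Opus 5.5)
The proof has two parts. First comes a standard Z-estimator argument for $\hat\theta_n$, with the weights treated as the fixed known function $w(x)=r(x)$. Then a Laplace-type (Bernstein--von Mises) expansion of the generalised posterior around $\hat\theta_n$. Throughout, write $s_\theta=\nabla_\theta(-\ell_\theta)$ and $H_\theta=\nabla^2_\theta(-\ell_\theta)$, matching the sign convention of $\mathcal L_n(\theta;w)=\mathbb{P}_n[w(-\ell_\theta)]$ in Appendix \ref{App:setup}. Thus $\nabla\mathcal L_n(\theta;w)=\mathbb{P}_n[w s_\theta]$ and $\nabla^2\mathcal L_n(\theta;w)=\mathbb{P}_n[wH_\theta]$.

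\textbf{Step 1: asymptotic normality.} On the event $\hat\theta_n\in N$, which has probability tending to $1$, I would Taylor expand the estimating equation around $\theta^\star$ using the mean value theorem componentwise:
\[
0=\mathbb{P}_n[w s_{\hat\theta_n}] = \mathbb{P}_n[w s_{\theta^\star}] + \bar H_n(\hat\theta_n-\theta^\star),
\]
where $\bar H_n$ has rows $\mathbb{P}_n[wH_{\tilde\theta_{n,j}}]$ with $\tilde\theta_{n,j}$ between $\hat\theta_n$ and $\theta^\star$. Since $\theta^\star$ minimises $\mathcal L(\cdot;w)$ in the interior (Theorem \ref{thm:exist} and \ref{A:ident}), and the envelope \ref{A:env} permits differentiation under the integral, $\mathbb{E}_G[w s_{\theta^\star}]=0$. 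The CLT assumption \ref{A:CLT} then gives $\sqrt n\,\mathbb{P}_n[w s_{\theta^\star}]\xrightarrow{d}\mathcal N(0,J)$.

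For the Hessian term, I would split
\[
\bar H_n-I=\bigl(\bar H_n-\mathbb{E}_G[wH_{\tilde\theta}]\bigr)+\bigl(\mathbb{E}_G[wH_{\tilde\theta}]-I\bigr).
\]
The first piece vanishes by the assumed uniform Glivenko--Cantelli property. The second vanishes by continuity of $\theta\mapsto\mathbb{E}_G[wH_\theta]$, which follows from dominated convergence under \ref{A:env}, together with $\tilde\theta\xrightarrow{P}\theta^\star$. Since $I$ is invertible by \ref{A:ident}, $\bar H_n$ is invertible with probability tending to $1$. Slutsky's lemma then yields $\sqrt n(\hat\theta_n-\theta^\star)\xrightarrow{d}\mathcal N(0,I^{-1}JI^{-1})$.

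\textbf{Step 2: local posterior Gaussianity.} I would reparametrise by $u=\sqrt n(\theta-\hat\theta_n)$. The posterior density of $u$ is proportional to
\[
\pi(\hat\theta_n+u/\sqrt n)\exp\{-n[\mathcal L_n(\hat\theta_n+u/\sqrt n;w)-\mathcal L_n(\hat\theta_n;w)]\}.
\]
A second-order expansion with zero gradient at $\hat\theta_n$ gives the exponent $-\tfrac12 u^\top \check H_n u$, where $\check H_n$ is the empirical weighted Hessian at an intermediate point. By the same uniform GC and continuity argument as in Step 1, $\sup_{\|u\|\le M}|u^\top(\check H_n-I)u|\xrightarrow{P}0$. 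Continuity and positivity of the prior at $\theta^\star$ from \ref{A:prior} give $\pi(\hat\theta_n+u/\sqrt n)/\pi(\theta^\star)\to1$ uniformly on $\|u\|\le M$. Together these show that the unnormalised density converges uniformly on the ball to $\pi(\theta^\star)e^{-u^\top Iu/2}$. Note that the curvature is $I$ and not the sandwich matrix, which is why the limit is $\mathcal N(0,I^{-1})$.

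\textbf{Main obstacle: the normalising constant.} Showing the posterior mass outside $\{\|u\|\le M_n\}$ is negligible is the step I expect to be hardest, and it is needed to pass from ratios of densities to probabilities of sets. I would split the complement into two regions.
\begin{itemize}
\item On the far region $\|\theta-\theta^\star\|>\delta$, Theorem \ref{thm:consistency} already shows the posterior mass tends to $0$. More quantitatively, separation from \ref{A:ident} together with uniform convergence \ref{A:GC} gives a gap $\mathcal L_n(\theta;w)-\mathcal L_n(\hat\theta_n;w)\ge\eta>0$ there. That gap contributes a factor $e^{-n\eta}$, which swamps the polynomial $n^{d/2}$ normalisation.
\item On the intermediate region $M\le\|u\|\le\delta\sqrt n$, positive definiteness of the empirical Hessian uniformly over the $\delta$-ball gives a bound $e^{-c\|u\|^2}$. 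Its integral vanishes as $M\to\infty$.
\end{itemize}
Combining the two regions with the local uniform convergence, a Scheff\'e-type argument gives the total variation bound restricted to sets $B\subset\{\|u\|\le M\}$, as the statement requires.
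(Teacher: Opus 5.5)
Your proof is correct. Step 1 is the paper's part (i): Taylor-expand the weighted score at $\theta^\star$, apply Assumption \ref{A:CLT}, use uniform Glivenko--Cantelli plus continuity of $\theta\mapsto\mathbb{E}_G[wH_\theta]$ for the Hessian, and finish with Slutsky. Your row-wise intermediate points are in fact the correct mean value theorem for a vector-valued map, where the paper uses a single $\dot\theta_n$.

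Step 2 is a genuinely different route. The paper asserts a local quadratic expansion of $n\{\mathcal L_n(\theta;w)-\mathcal L_n(\hat\theta_n;w)\}$ on $\{\|\theta-\hat\theta_n\|\le Mn^{-1/2}\}$ and then invokes Theorem 2.1 of Kleijn and van der Vaart. You instead prove the Laplace approximation directly. The citation is shorter, but that theorem also takes as a hypothesis posterior concentration at rate $n^{-1/2}$: the mass of $\{\|\theta-\theta^\star\|>M_n/\sqrt n\}$ must vanish for every $M_n\to\infty$. The paper asserts this but never verifies it. Your three regions are exactly that verification: the local ball, the intermediate shell bounded by $e^{-c\|u\|^2}$ via uniform positive definiteness of $\mathbb{P}_n[wH_\theta]$ near $\theta^\star$, and the far region handled by Theorem \ref{thm:consistency}. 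Your argument is therefore more self-contained. You are also right that restricting to $B\subset\{\|u\|\le M\}$ does not remove the need for tail control, since $\pi_n$ is normalised over all of $\Theta$.

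There are two small corrections.
\begin{itemize}
\item Your ``more quantitatively'' aside claims a gap $\mathcal L_n(\theta;w)-\mathcal L_n(\hat\theta_n;w)\ge\eta$ on all of $\{\|\theta-\theta^\star\|>\delta\}$. This does not follow from Assumption \ref{A:GC}, which gives uniform convergence only over $N$. You do not need it. Theorem \ref{thm:consistency} already controls the \emph{normalised} far-region mass, which is all the final comparison uses. The intermediate shell can then be bounded relative to the unnormalised integral over $\{\|u\|\le 1\}$, which converges to a strictly positive limit. No $e^{-n\eta}$ versus $n^{d/2}$ comparison is required. Your far-region step is thus exactly as strong as Theorem \ref{thm:consistency}, whose own proof in the paper likewise only controls $\theta\in N$.
\item Keep the fixed $M$ of the statement separate from the auxiliary radius you send to infinity when showing that the normaliser converges to the Gaussian one.
\end{itemize}
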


\begin{remark}[Posterior vs.\ frequentist variance]\label{rem:variance}
The local BvM approximation yields a posterior covariance of $I^{-1}/n$, whereas the
frequentist (sandwich) variance of $\hat\theta_n$ is $I^{-1}JI^{-1}/n$.  These coincide
if and only if $J=I$, i.e.\ when $w(x)=1$ and the model is correctly specified (the
standard Cram\'er--Rao setting).  Under reweighting or misspecification, $J\ne I$ in
general, so the generalised posterior will be \emph{over-concentrated} relative to the
true sampling variability of $\hat\theta_n$.  For frequentist validity of posterior
credible intervals one should therefore use the sandwich-corrected covariance
$I^{-1}JI^{-1}/n$, for instance via a Laplace approximation centred at $\hat\theta_n$
with this covariance, rather than reading off intervals from the raw posterior.
\end{remark}

See Appendix \ref{app:asymp} for proof and details: a Bernstein-von Mises result demonstrating asymptotic posterior normality.

\begin{thm}[Asymptotic effect of plug-in weights]\label{thm:plugin} Assume \ref{A:param}--\ref{A:CLT}, \ref{A:prior}, and \ref{A:ratio}, and the Glivenko--Cantelli condition for $\{w H_\theta:\theta\in N\}$ as in Theorem \ref{thm:asymp}. Suppose $\hat\gamma_m$ is based on an independent sample of size $m$, with \[ \sqrt{m}(\hat\gamma_m-\gamma_0) \xrightarrow{d} \mathcal N(0,\Sigma_\gamma), \] and define $\hat r_n(x) = r_{\hat\gamma_m}(x)$. Let $\tilde\theta_n$ minimise $\mathcal L_n(\theta;\hat r_n)$ or solve $\nabla_\theta\mathcal L_n(\tilde\theta_n;\hat r_n)=0$, and assume $\tilde\theta_n\xrightarrow{P}\theta^\star$. Let \[ A := \mathbb{E}_G[\dot r_{\gamma_0}(X)s_{\theta^\star}(X)^\top], \qquad K_c := c\,A\,\Sigma_\gamma\,A^\top, \] where $c := \lim_{n\to\infty} m/n\in[0,\infty]$ (if the limit exists). \begin{enumerate} \item If $m/n\to\infty$ (i.e.\ $c=0$), then \[ \sqrt{n}(\tilde\theta_n-\theta^\star) \xrightarrow{d} \mathcal N(0,I^{-1} J I^{-1}). \] \item If $m/n\to c\in(0,\infty)$, then \[ \sqrt{n}(\tilde\theta_n-\theta^\star) \xrightarrow{d} \mathcal N\big(0, I^{-1}(J+K_c)I^{-1}\big). \] \end{enumerate} \end{thm}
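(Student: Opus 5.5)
We will use the standard two-step M-estimation argument: a Taylor expansion of the estimating equation in $\theta$, followed by a second expansion of the plug-in weights in $\gamma$. Write $\Psi_n(\theta;r) := \nabla_\theta \mathcal L_n(\theta;r) = -\mathbb{P}_n[r\,s_\theta]$, where $s_\theta=\nabla_\theta\ell_\theta$ and $H_\theta = -\nabla^2_\theta \ell_\theta$ (so that $\nabla^2_\theta\mathcal L_n(\theta;r)=\mathbb{P}_n[rH_\theta]$). Take the true weights to be $w=r_{\gamma_0}$.

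\emph{Step 1 (expansion in $\theta$).} Expand $0=\Psi_n(\tilde\theta_n;\hat r_n)$ about $\theta^\star$:
\[
0=\Psi_n(\theta^\star;\hat r_n)+\mathbb{P}_n[\hat r_n H_{\bar\theta_n}](\tilde\theta_n-\theta^\star),
\]
with $\bar\theta_n$ on the segment. We need $\mathbb{P}_n[\hat r_n H_{\bar\theta_n}]\xrightarrow{P} I$, and we get it in two pieces. First, replace $\hat r_n$ by $r_{\gamma_0}$. The error is controlled by $\|\hat\gamma_m-\gamma_0\|=O_P(m^{-1/2})$ together with the differentiability and envelope conditions in \ref{A:ratio}, which give $|r_\gamma-r_{\gamma_0}|\le \|\gamma-\gamma_0\|\,F$ with $\mathbb{E}_G[F\sup_N\|H_\theta\|]<\infty$. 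Second, apply the assumed Glivenko--Cantelli property for $\{wH_\theta\}$ and continuity of $\theta\mapsto\mathbb{E}_G[wH_\theta]$ at $\theta^\star$, which is legitimate because $\tilde\theta_n\xrightarrow{P}\theta^\star$. Since $I$ is invertible by \ref{A:ident}, this yields
\[
\sqrt n(\tilde\theta_n-\theta^\star)=-I^{-1}\sqrt n\,\Psi_n(\theta^\star;\hat r_n)+o_P(1).
\]

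\emph{Step 2 (expansion in $\gamma$).} Decompose
\[
\sqrt n\Psi_n(\theta^\star;\hat r_n)=\sqrt n\Psi_n(\theta^\star;r_{\gamma_0})-\sqrt n\,\mathbb{P}_n[\dot r_{\bar\gamma}s_{\theta^\star}^\top]^\top(\hat\gamma_m-\gamma_0).
\]
The first term is $-\sqrt n\,\mathbb{P}_n[w s_{\theta^\star}]$. Because $\mathbb{E}_G[ws_{\theta^\star}]=0$ (first-order condition at $\theta^\star$), \ref{A:CLT} gives convergence of this term to $\mathcal N(0,J)$. For the second term, a law of large numbers applied to $\dot r_\gamma s_{\theta^\star}$, with local uniformity in $\gamma$ from \ref{A:ratio}, gives $\mathbb{P}_n[\dot r_{\bar\gamma}s_{\theta^\star}^\top]\xrightarrow{P}A$. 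The second term therefore equals $\sqrt{n/m}\,A\sqrt m(\hat\gamma_m-\gamma_0)+o_P(1)$. (The orientation of $A$ versus $A^\top$ is a matter of convention, and we will fix it so that the covariance reads $A\Sigma_\gamma A^\top$.)

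\emph{Step 3 (combining).} In case 1 we have $\sqrt{n/m}\to0$, so the second term is $o_P(1)$ and Slutsky's lemma gives the sandwich limit $I^{-1}JI^{-1}$. In case 2 the second term converges in distribution to $\mathcal N(0,c^{-1}A\Sigma_\gamma A^\top)$. Since $\hat\gamma_m$ is computed from a sample independent of $X$, the two terms are asymptotically independent, and their joint limit is the product measure; the characteristic function factorises. The summed covariance is $J+K$.

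There is a discrepancy here: the statement defines $K_c=cA\Sigma_\gamma A^\top$ with $c=\lim m/n$, but the computation produces the factor $\lim n/m=1/c$. I would present the proof with the factor $n/m$, flag that $K_c$ should read $c^{-1}A\Sigma_\gamma A^\top$ (equivalently, $c$ should be defined as $\lim n/m$), and note that case 1 is unaffected because there $n/m\to0$. The final step is to pre- and post-multiply by $I^{-1}$.

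The main obstacle is Step 1's first piece: uniform control of the plug-in error $\mathbb{P}_n[(\hat r_n-r_{\gamma_0})H_\theta]$ and the random-$\gamma$ law of large numbers in Step 2. Both need a local Lipschitz/envelope condition on $\gamma\mapsto r_\gamma$, and I would first check whether \ref{A:ratio} supplies this directly. If it does not, I would add a dominating-envelope assumption and use a mean-value bound. Independence of the samples is what lets us condition on $\hat\gamma_m$ and treat $\hat r_n$ as a fixed function in the empirical-process steps, which avoids any need for Donsker conditions over $\gamma$.
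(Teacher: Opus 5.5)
Your argument is essentially the paper's proof. Both split $\sqrt n\,\nabla_\theta\mathcal L_n(\theta^\star;\hat r_n)$ into the true-weight score, handled by the CLT, plus a linearised plug-in term, handled by an LLN to $A$ and the independence of $\hat\gamma_m$; both then finish with the Taylor step in $\theta$ and Slutsky. The only difference is that you expand to first order in $\gamma$ with a locally uniform LLN, whereas the paper uses a second-order expansion with an $O_P(\sqrt n/m)$ quadratic remainder.

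Your flagged discrepancy is real, and it is a slip in the paper's own proof. The paper writes $\sqrt{n/m}\,\sqrt m(\hat\gamma_m-\gamma_0)\xrightarrow{d}\sqrt c\,Z_\gamma$ with $c=\lim m/n$, but the limit is $c^{-1/2}Z_\gamma$; likewise, ``$m/n\to\infty$, i.e.\ $c=0$'' contradicts that definition of $c$. The inflation should read $c^{-1}A^\top\Sigma_\gamma A$. The transpose is forced rather than a convention: $A$ as defined is $p\times d$, so $A\Sigma_\gamma A^\top$ is $p\times p$ and cannot be added to $J$.

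The extra envelope condition on $\gamma\mapsto r_\gamma$ that you anticipate is genuinely needed, since Assumption~\ref{A:ratio} does not supply it. The paper relies on such a condition tacitly, both for its quadratic remainder and for its Hessian step.
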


\begin{remark}
The case $m/n\to\infty$ ($c=0$, $K_c=0$) recovers the oracle sandwich variance
$I^{-1}JI^{-1}$ of Theorem~\ref{thm:asymp}: when the weight-estimating sample is
much larger than the inference sample, plug-in weight estimation contributes
negligible additional uncertainty.  Conversely, when $m\asymp n$ ($c\in(0,\infty)$),
the variance inflation $K_c=cA\Sigma_\gamma A^\top$ is non-negligible and must be
accounted for in inference.
\end{remark}

See Appendix \ref{app:plugin} for proof and details concerning the influence of the noise from the plug-in classifier-trained weights on the asymptotic posterior characteristics.

We also provide conditions on the distributions $G$ and $Q$ allowing for successful practical ratio estimation in Appendix \ref{app:gq}.

\section{Examples}

\subsection{Gaussian $G$ and $Q$}

Here we present a simple example performing inference for the mean and variance of a Gaussian statistical model $p(X|\mu,\sigma) = \mathcal{N}(\mu,\sigma^2)$, when the observed data $G\sim \mathcal{N}(0,2)$ exhibits a selection bias relative to the ideal study population of $Q\sim\mathcal{N}(1,1)$. The standard Bayes posterior trained on the observed data $X\sim G$ would thus exhibit bias for the mean $\mu$ and standard deviation $\sigma$.

We demonstrate the empirical behaviour of the Bayesian reweighting schemes for this simple example. Non-informative priors of $\mathcal{N}(0,5)$ were used for both $p(\mu)$ and $p(\sigma)$, and are not expected to contribute significantly to the posterior.

200 ``low quality'' data points X were observed to be drawn from the biased distribution $G$ and 50 ``high quality'' data points were available from the ideal distribution $Q$. Logistic regression trained on an orthogonal basis up to order 4 was used as a classifier to estimate the ratios for the weights $w_i$. Weights were estimated using 5-fold cross-validation, and were computed ahead of time before the update inference was performed in brms \citep{burkner2017brms}, an R package for specifying glm regression models, which interfaces with Stan for fully Bayesian HMC inference.

For comparison, we include results for four different inference approaches:

\begin{enumerate}
    \item Standard Bayes trained on the observed data $X$ without correction: $standard$.
    \item Bayes weighted according the the estimated ratio $r(x)=q(X)/g(X)$ targeting the exact idealised distribution $q(X)$: $odds$.
    \item Bayes reweighted by the probability $q(X)$, targeting the composite distribution $\eta(X) = q(X)g(X)$. This is motivated by upweighting points most closely resembling the ideal data density $q(X)$, without correcting for the observed data generating distribution $g(X)$: $eta$.
    \item As described in Appendix \ref{app:classprag}, Bayes reweighted by the tempered ratio $r(X)^{\alpha} = q(X)^{\alpha}/g(X)^{\alpha}$, targeting the ratio distribution $q(X)^{\alpha}g(X)^{1-\alpha}$ with $\alpha = 0.75$, expected to show greater stability than the $\alpha=1$ updates: $odds\_alpha$.
\end{enumerate}

Example of the belief distributions generated by each inference method for a single random seed are presented in Figure \ref{fig:fig1}, alongside a histogram of the weights used. We see that in each instance, the $standard$ posterior appears to be slowly converging to the biased estimate from the observed data $(\mu=0,\sigma=2$), while all three of the reweighted posteriors are debiased in the correct direction. The exact $odds$ reweighting makes the biggest debiasing step for both parameters: for the parameter $\mu$ it appears to slightly overcorrect, leaving the tempered $odds\_alpha$ the most accurate, while for the $\sigma$ parameter all the methods undercorrect, making exact $odds$ the most accurate. The distribution of weights demonstrates greater dispersion for the exact $odds$ weights, and less variation in the $odds\_alpha$ or $eta$ updates.

\begin{figure}
\begin{subfigure}{0.49\textwidth}
  \centering
  \includegraphics[width=.99\linewidth]{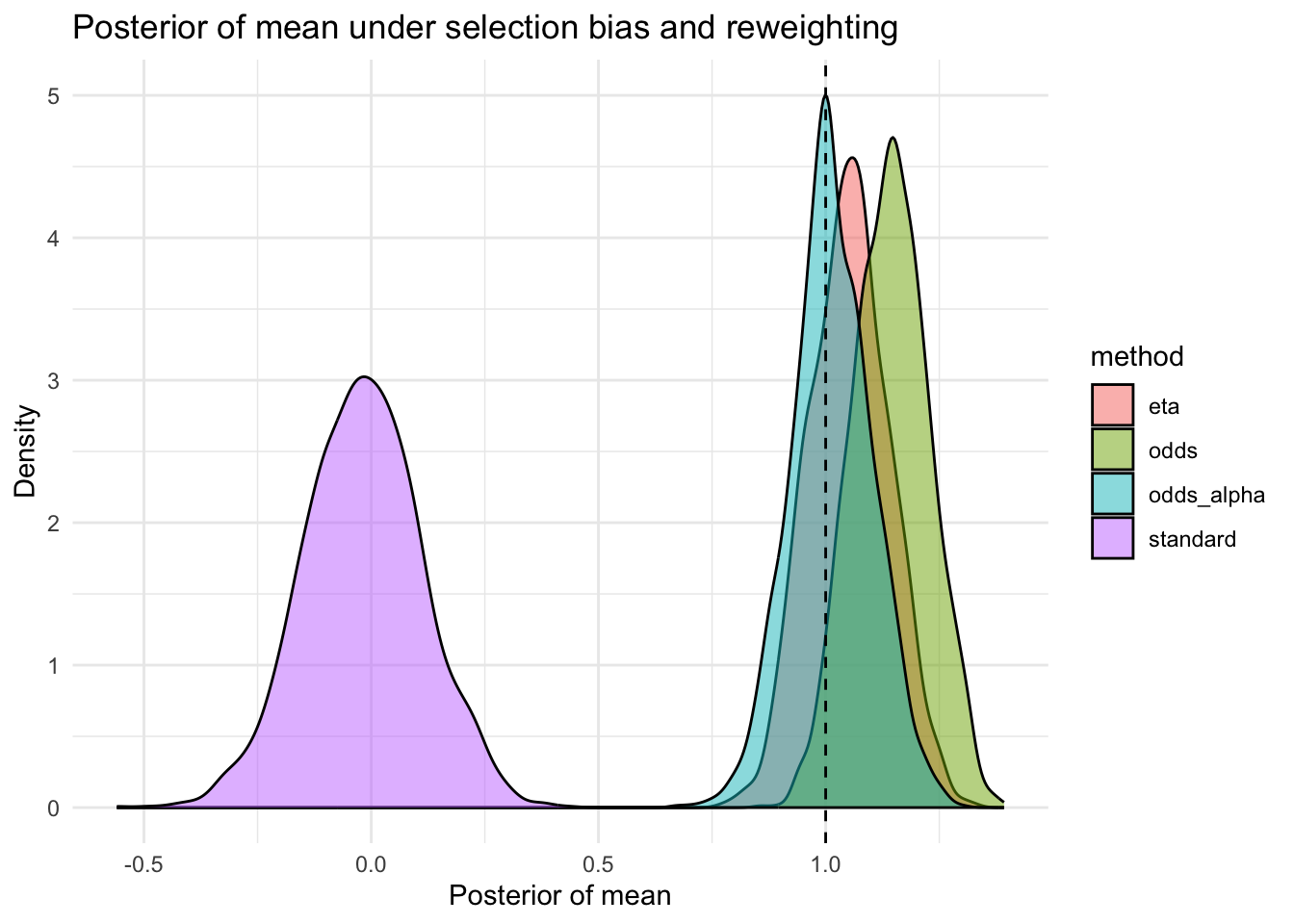}
  \caption{Fig. 1a: Posterior distributions for the mean parameter of a Gaussian distribution under different reweighting methods.}
  \label{fig1:sfig1}
\end{subfigure}
\begin{subfigure}{0.49\textwidth}
  \centering
  \includegraphics[width=.99\linewidth]{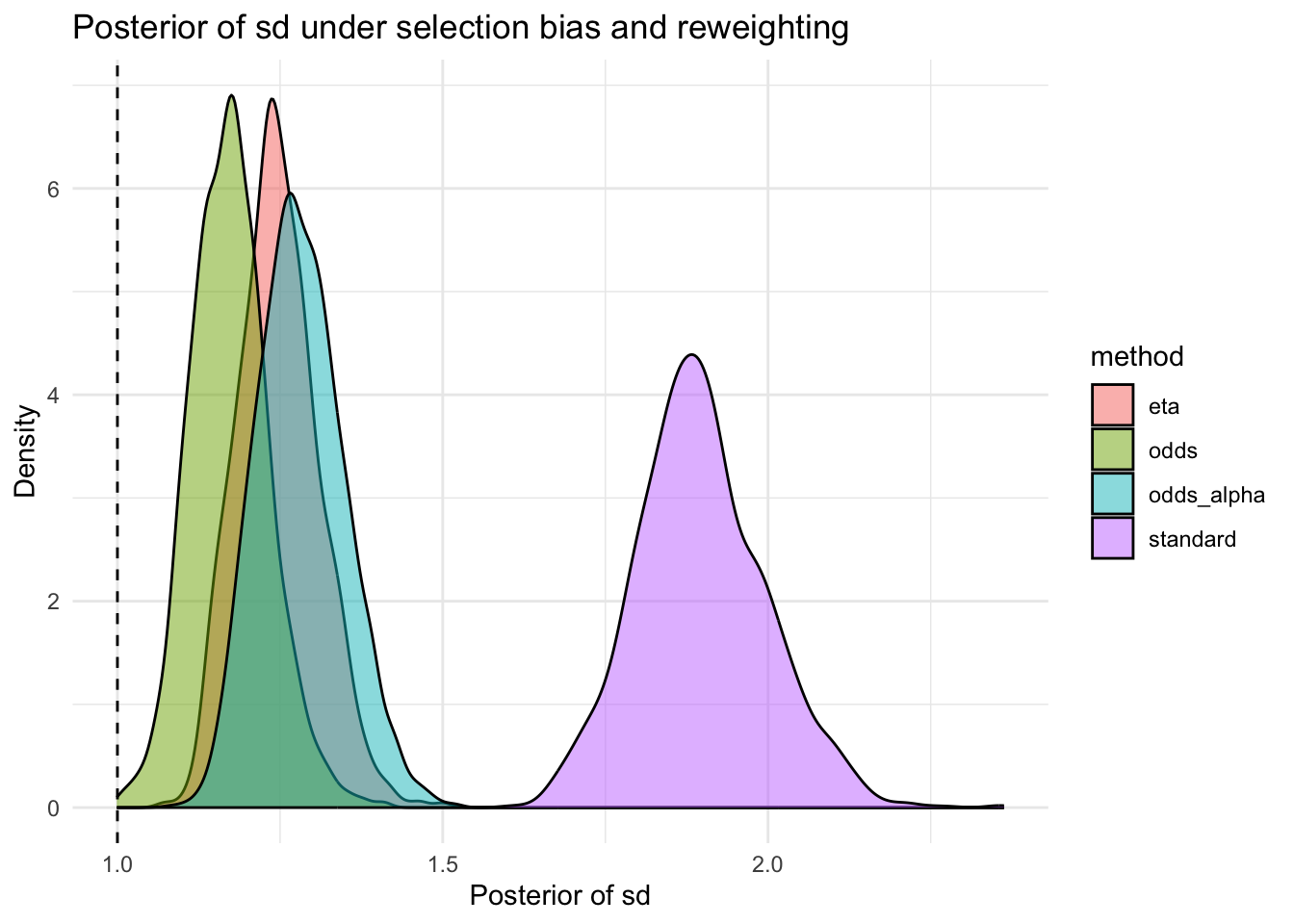}
  \caption{Fig. 1b: Posterior distributions for the sd parameter of a Gaussian distribution under different reweighting methods.}
  \label{fig1:sfig2}
\end{subfigure}
\begin{subfigure}{0.49\textwidth}
  \centering
  \includegraphics[width=.99\linewidth]{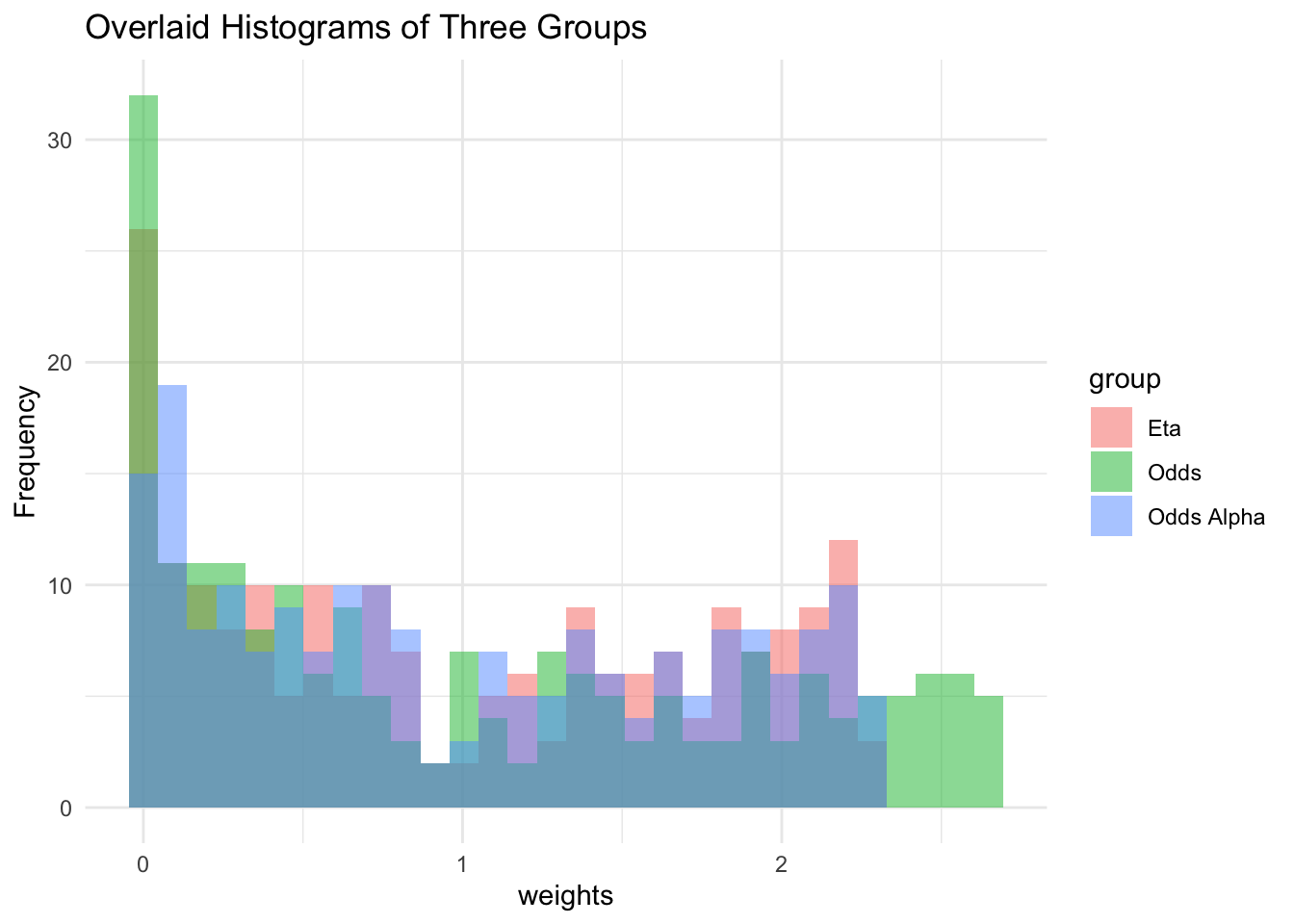}
  \caption{Fig. 1c: Histograms of the weights used under different reweighting methods.}
  \label{fig1:sfig3}
\end{subfigure}
\begin{subfigure}{0.49\textwidth}
  \centering
  \includegraphics[width=.99\linewidth]{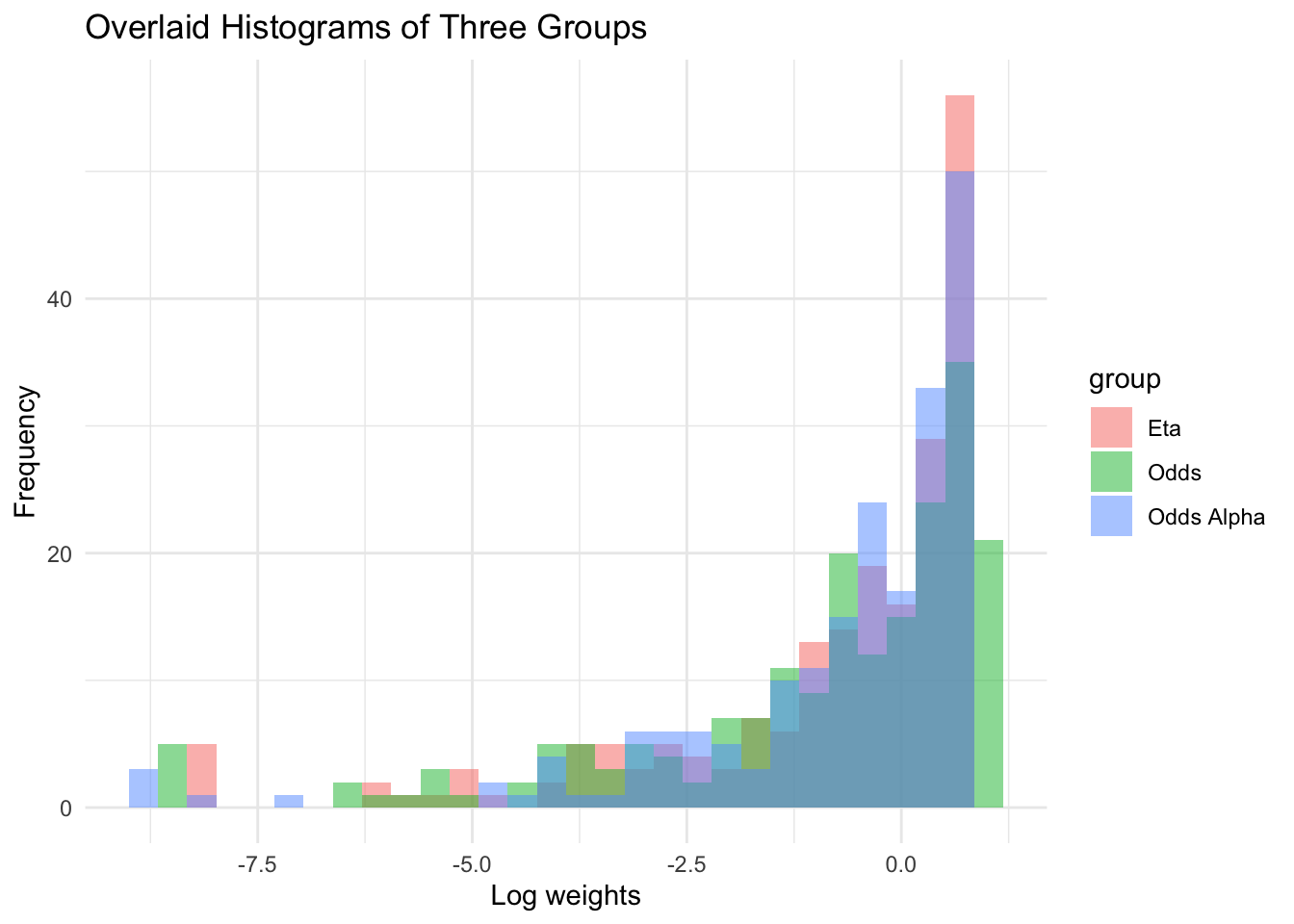}
  \caption{Fig. 1d: Histograms of the log weights used under different reweighting methods.}
  \label{fig1:sfig4}
\end{subfigure}
\caption{Example results comparing different reweighting methods for inference of a Gaussian distribution with unknown mean and variance.}
\label{fig:fig1}
\end{figure}

Systematic results averaged over 100 random seeds are displayed in Table \ref{tab1:1d_method_performance}, reporting mean biases, RMSEs, standard deviations of the posterior means, and coverages for $\mu$ and $\sigma$, and mean EDS for the update as a whole. We see that the exact $odds$ update has the smallest bias of all the methods considered, combined with the largest standard deviation of the posterior mean averaged across seeds, reflecting the fact that it targets the exact distribution in a less stable manner. We notice that the $eta$ and $odds\_alpha$ methods each achieve more stable estimation at the expense of increased bias. The standard posterior exhibits the clear bias associated with the selection bias of the observed data. The mean EDS across seeds reflects the degree of the stability of the reweighting, with the standard posterior mean being the most stable and the exact $odds$ update the least stable, with the others in between. The $odds$ update also has the closest to the ideal $90\%$ coverage, although it is still substantially underestimated. 

\begin{table}[ht]
\centering
\caption{Performance metrics by method for parameters $\mu$ and $\sigma$. Bias, RMSE, posterior standard deviation (SD), and coverage are shown; EDS denotes effective data size.}
\label{tab1:1d_method_performance}
\begin{tabular}{lrrrrr}
\toprule
Method & Mean Bias & RMSE & SD(mean) & Coverage & Mean EDS \\
\midrule
\multicolumn{6}{l}{\textbf{Parameter: $\mu$}} \\
\midrule
$standard$  & $-1.010$  & $1.020$ & $0.148$ & $0.00$ & $200$ \\
$odds$   & $0.009$   & $0.141$ & $0.141$ & $0.63$ & $107$ \\
$eta$   & $-0.0486$ & $0.139$ & $0.131$ & $0.60$ & $118$ \\
$odds\_alpha$   & $-0.0624$ & $0.146$ & $0.133$ & $0.62$ & $119$ \\
\midrule
\multicolumn{6}{l}{\textbf{Parameter: $\sigma$}} \\
\midrule
$standard$  & $0.991$   & $0.996$ & $0.105$ & $0.00$ & $200$ \\
$odds$   & $0.0216$  & $0.172$ & $0.172$ & $0.62$ & $107$ \\
$eta$   & $0.0655$  & $0.146$ & $0.132$ & $0.54$ & $118$ \\
$odds\_alpha$   & $0.0933$  & $0.174$ & $0.148$ & $0.44$ & $119$ \\
\bottomrule
\end{tabular}
\end{table}

\subsection{Multivariable Regression}

In this section we consider a more complex and realistic example of registry data usage, involving estimating the influence of various covariates on outcome of interest in a registry data set subject to selection bias. We consider the outcome as a linear combination of a measured biomarker, age, sex, and an unmeasured noise variable, i.e. $outcome = biomarker + age + sex + noise$, with $biomarker$, $age$, and $noise$ being drawn from $\mathcal{N}(0,1)$, and $sex$ drawn from Bernoulli $Bern(p=0.5).$

The idealised population consists of 200 samples from the idealised distribution  $Q$, while 500 samples from the registry distribution $G$ are generated with a biasing mechanism, in which data points are sampled without replacement from a population of 100 with probability $p=(1+exp(- outcome + median(outcome)))^{-1}$, i.e. patients with higher outcome scores are likely to be disincluded from the registry population. This represents a bias that realistically exists in observational data as more severe cases are harder to include in population-wide registries.

The weights $w_i$ were estimated using 5-fold CV and logistic regression, trained on all  quadratic moments and pairwise combinations of the observed variables. These were precomputed before the belief update, which was performed using brms to estimate regression coefficients for each covariate, plus an intercept and noise sd $\sigma$.

An example of posteriors generated from a single random seed is presented in Figure \ref{fig:fig2}. We see that the $odds$ update makes the largest correction for bias away from the $standard$ method for every parameter. The ground truth in this instance was derived from the posterior mean of a regression model trained on the sampled clinical data: the models are thus being compared with the empirical finite-sample Bayesian estimate rather than the true generative parameter value.

Table \ref{tab2:4d_brms_param_comparison} presents results averaged over 100 random seeds for biases, RMSEs, and coverages for each estimated parameter, and an overall EDS for the update for each method. We see that, like in the previous example, the $odds$ update consistently shows the smallest bias and the largest variation in the posterior mean, representing its ability to target the exact distribution, albeit in a less stable manner than the other methods. This is reflected in the $odds$ update having the smallest EDS of the methods considered. It also has the most consistently accurate coverage of the weighting methods, with the $90\%$ coverage estimated near-perfectly for the intercept and the coefficient for sex.

\begin{figure}
\begin{subfigure}{0.49\textwidth}
  \centering
  \includegraphics[width=.9\linewidth]{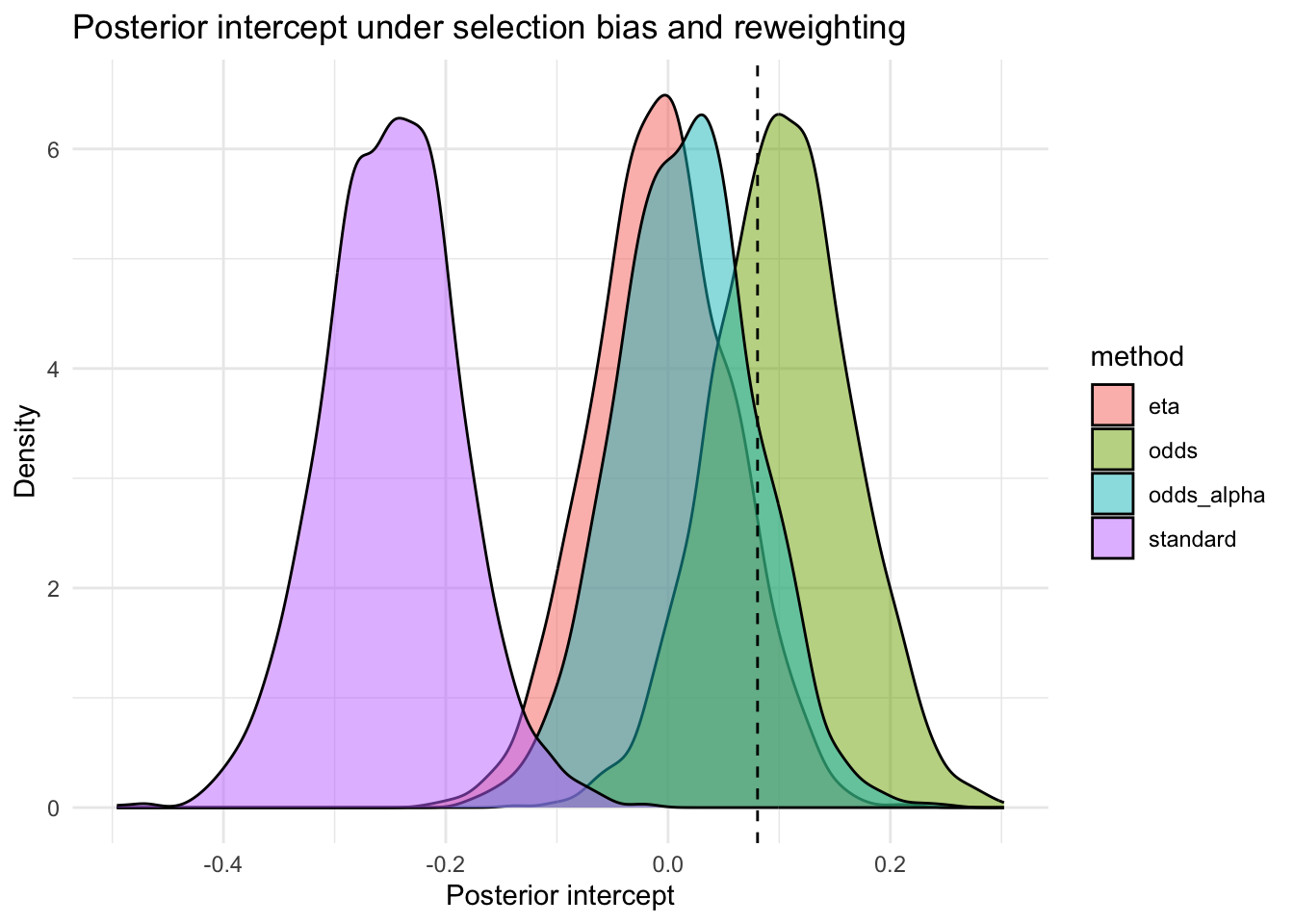}
  \caption{Fig. 2a: Posterior distributions for the intercept parameter of a linear regression under different reweighting methods.}
  \label{fig2:sfig1}
\end{subfigure}
\begin{subfigure}{0.49\textwidth}
  \centering
  \includegraphics[width=.9\linewidth]{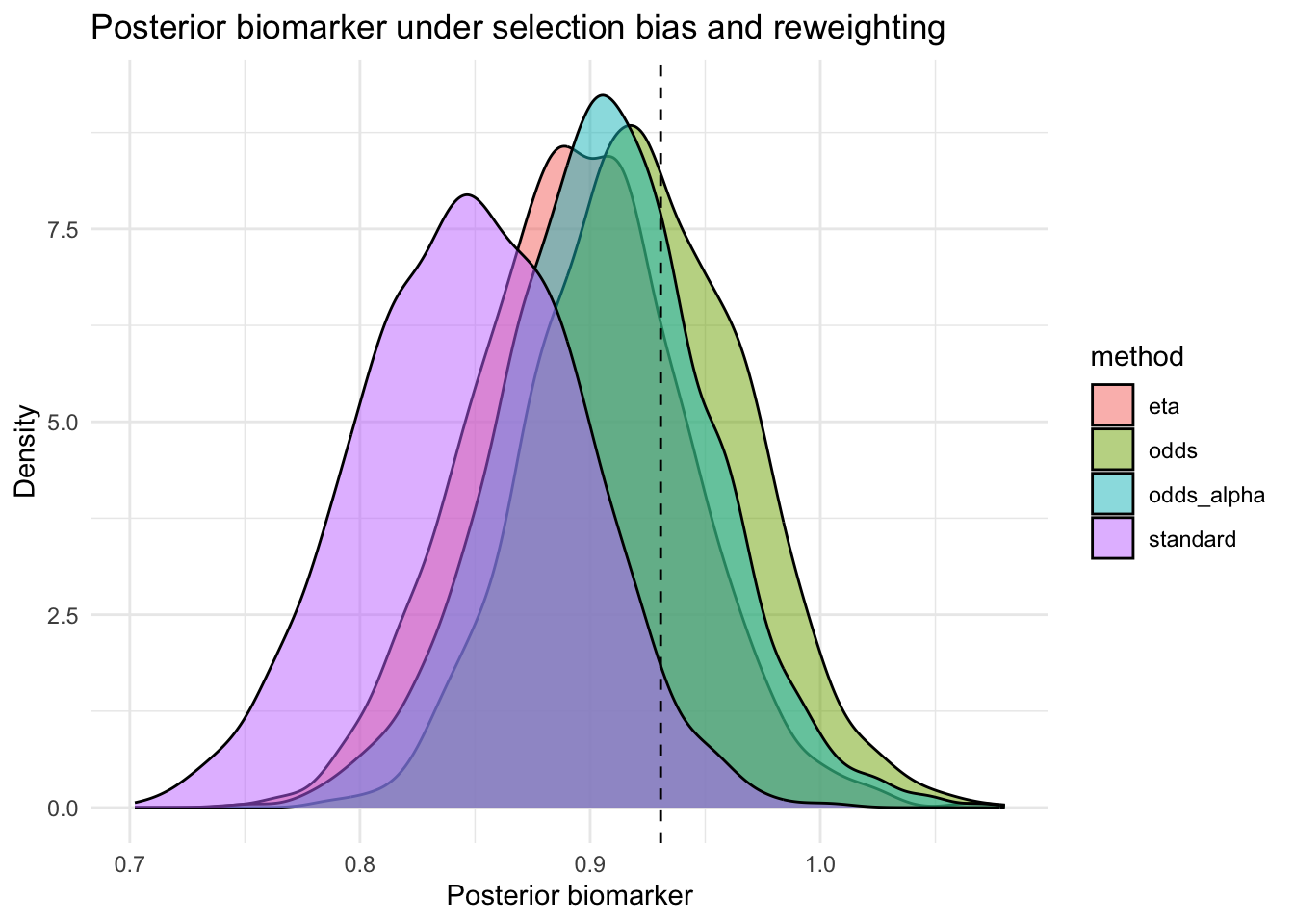}
  \caption{Fig. 2b: Posterior distributions for the biomarker coefficient of a linear regression under different reweighting methods.}
  \label{fig2:sfig2}
\end{subfigure}
\begin{subfigure}{0.49\textwidth}
  \centering
  \includegraphics[width=.9\linewidth]{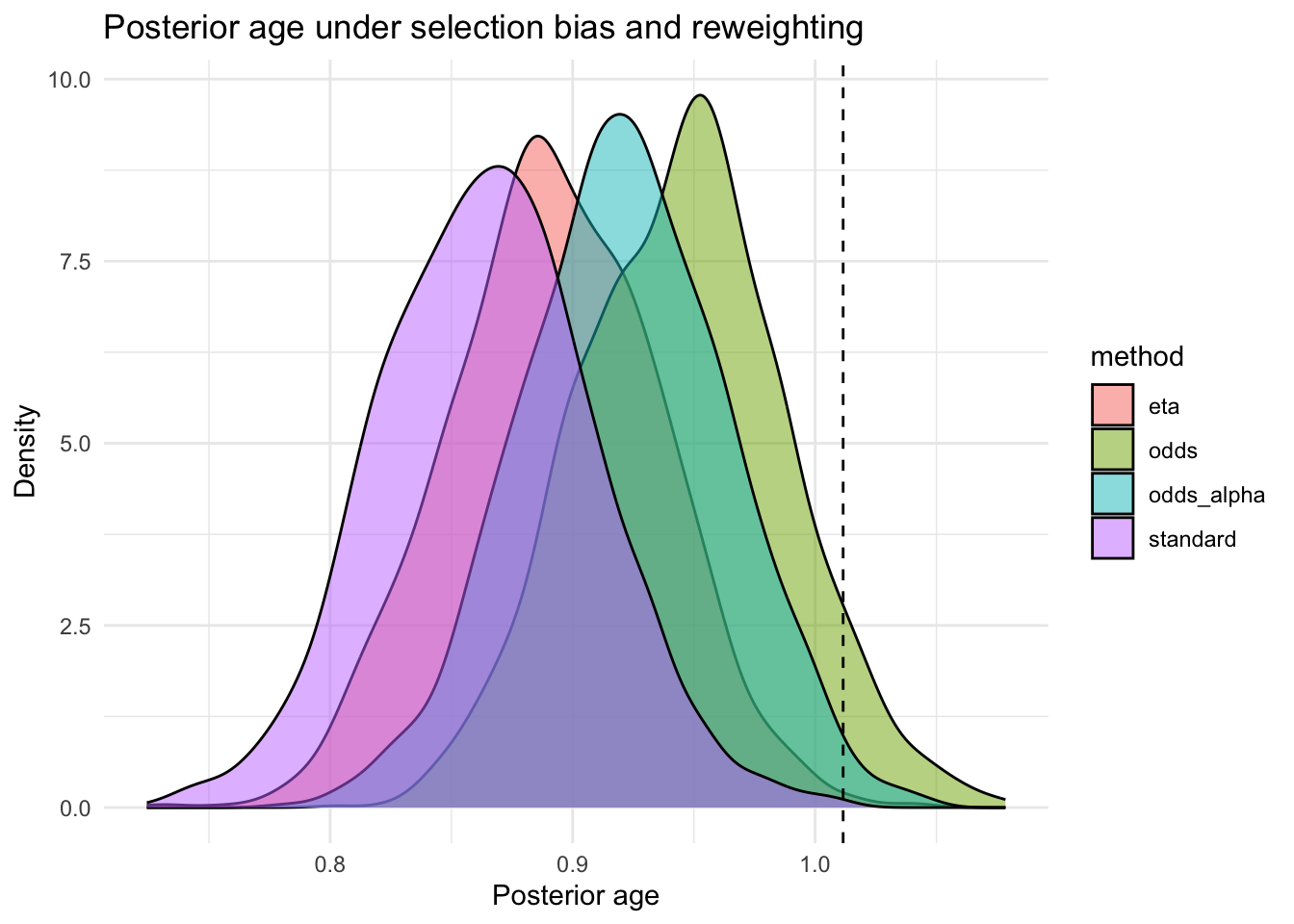}
  \caption{Fig. 2c: Posterior distributions for the age coefficient of a linear regression under different reweighting methods.}
  \label{fig2:sfig3}
\end{subfigure}
\begin{subfigure}{0.49\textwidth}
  \centering
  \includegraphics[width=.9\linewidth]{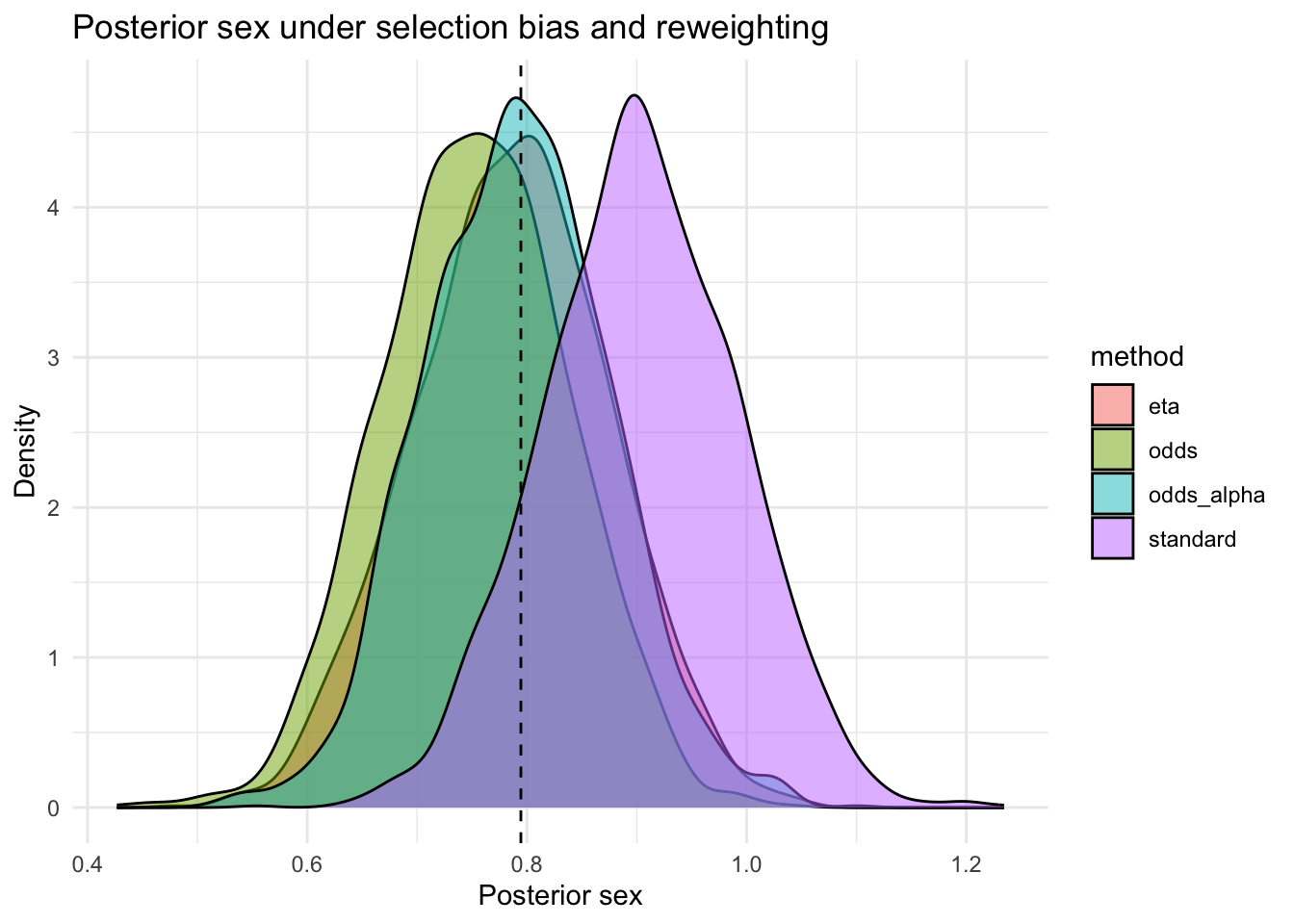}
  \caption{Fig. 2d: Posterior distributions for the sex coefficient of a linear regression under different reweighting methods.}
  \label{fi2:sfig4}
\end{subfigure}
\begin{subfigure}{0.49\textwidth}
  \centering
  \includegraphics[width=.9\linewidth]{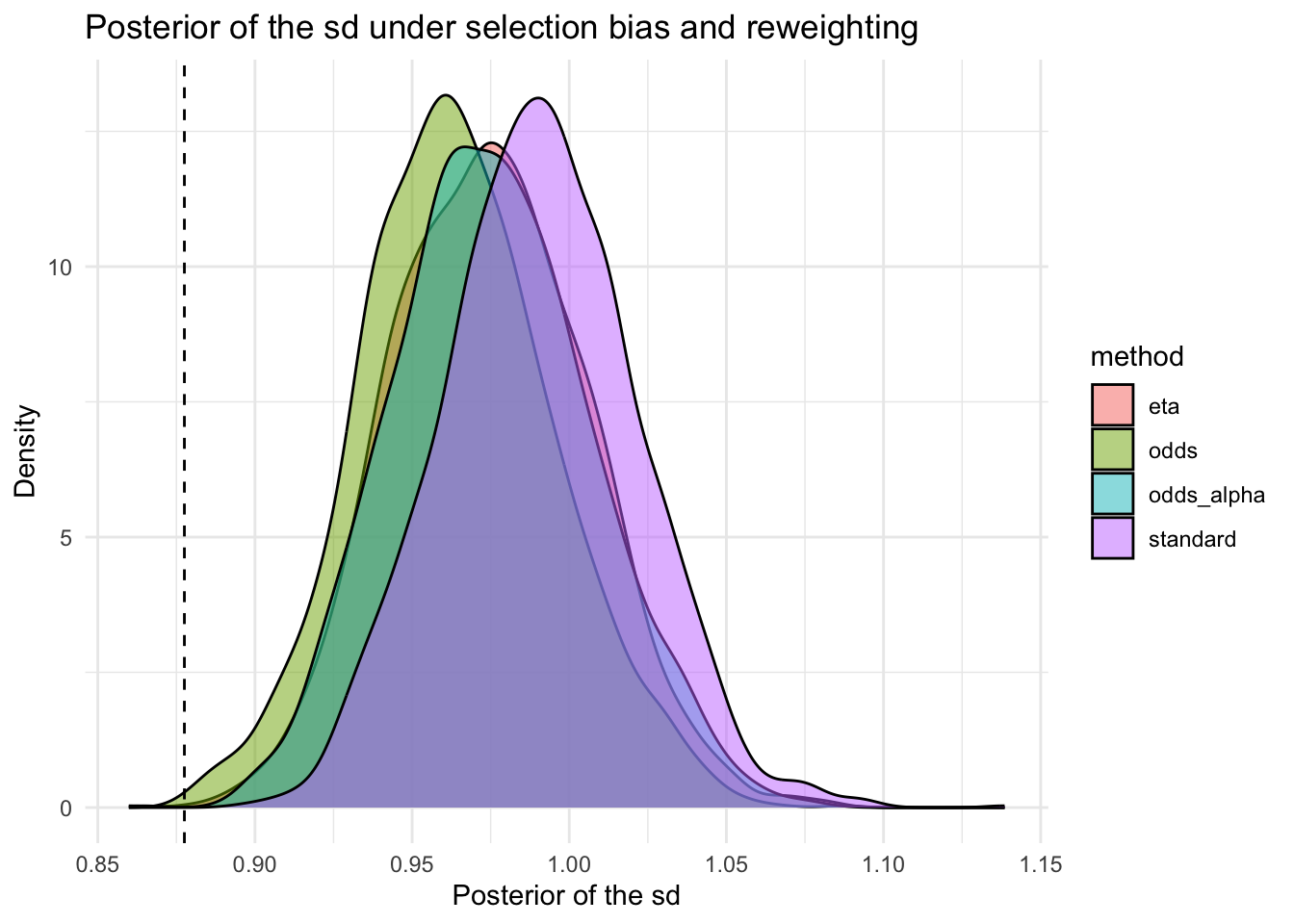}
  \caption{Fig. 2d: Posterior distributions for the noise parameter of a linear regression under different reweighting methods.}
  \label{fig2:sfig5}
\end{subfigure}
\begin{subfigure}{0.49\textwidth}
  \centering
  \includegraphics[width=.9\linewidth]{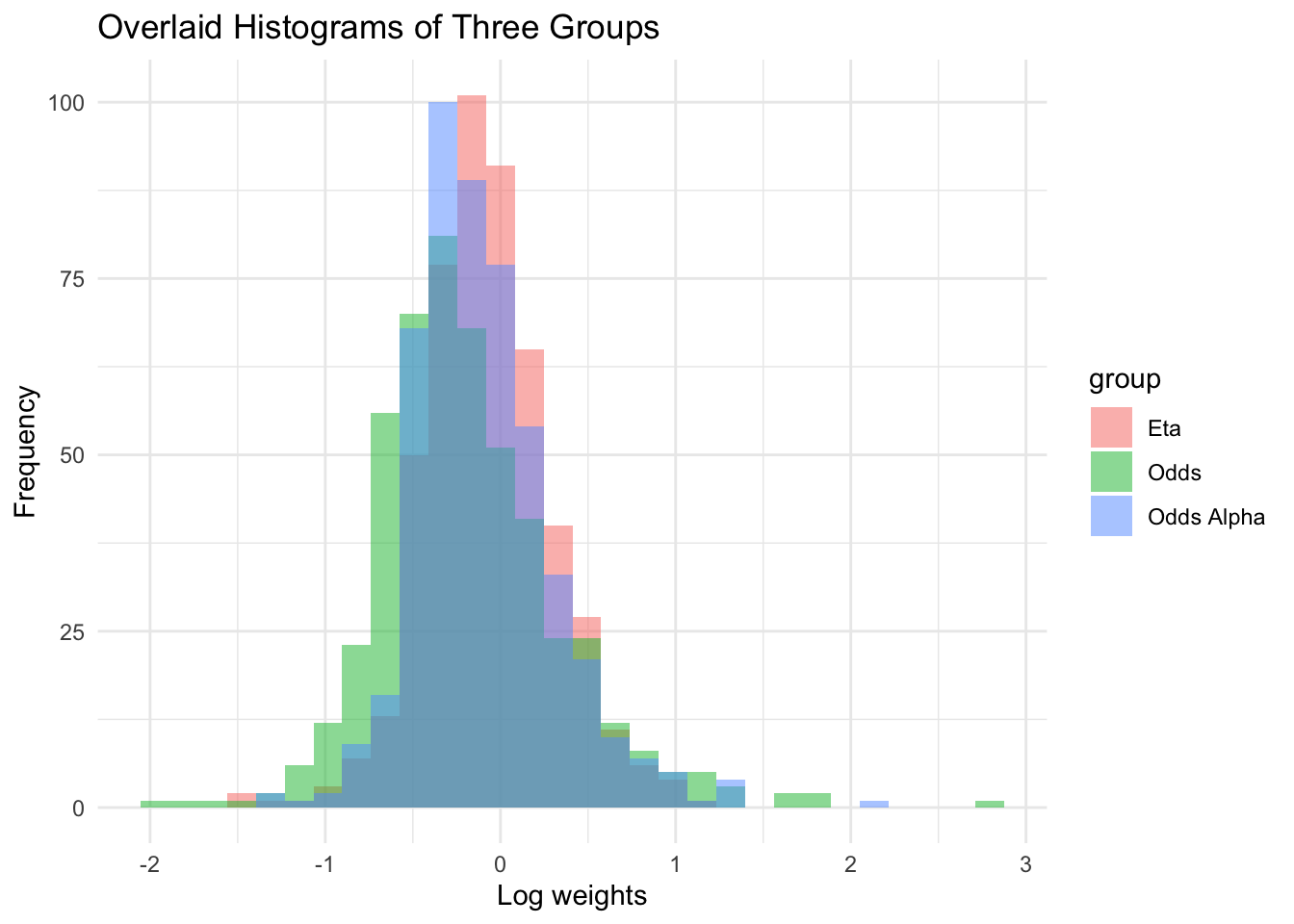}
  \caption{Fig. 2f: Histograms of the log weights used under different reweighting methods.}
  \label{fig2:sfig6}
\end{subfigure}
\caption{Example inference results comparing different reweighting methods for a simulated epidemiological model using multivariable linear regression.}
\label{fig:fig2}
\end{figure}

\begin{table}[ht]
\centering
\caption{Comparison of model performance by parameter. Shown are mean bias, RMSE, posterior SD (sd\_mean), coverage probability, and effective data size (EDS).}
\label{tab2:4d_brms_param_comparison}
\begin{tabular}{lrrrrr}
\toprule
Method & Mean Bias & RMSE & SD (Mean) & Coverage & EDS \\
\midrule
\multicolumn{6}{l}{Parameter: \textbf{b\_Intercept}} \\
\midrule
$standard$ & $-0.266$ & $0.294$ & $0.0624$ & $0.09$ & $500$ \\
$odds$  & $-0.0142$ & $0.0526$ & $0.111$ & $0.91$ & $265$ \\
$eta$  & $-0.0986$ & $0.111$ & $0.0789$ & $0.44$ & $428$ \\
$odds\_alpha$  & $-0.0776$ & $0.0929$ & $0.0845$ & $0.64$ & $367$ \\
\midrule
\multicolumn{6}{l}{Parameter: \textbf{b\_age}} \\
\midrule
$standard$ & $-0.127$ & $0.150$ & $0.0433$ & $0.28$ & $500$ \\
$odds$  & $-0.0623$ & $0.108$ & $0.0772$ & $0.50$ & $265$ \\
$eta$  & $-0.102$ & $0.125$ & $0.0473$ & $0.35$ & $428$ \\
$odds\_alpha$  & $-0.0774$ & $0.110$ & $0.0590$ & $0.49$ & $367$ \\
\midrule
\multicolumn{6}{l}{Parameter: \textbf{b\_biomarker}} \\
\midrule
$standard$ & $-0.111$ & $0.138$ & $0.0460$ & $0.33$ & $500$ \\
$odds$  & $-0.0490$ & $0.108$ & $0.0880$ & $0.44$ & $265$ \\
$eta$  & $-0.0872$ & $0.111$ & $0.0496$ & $0.40$ & $428$ \\
$odds\_alpha$  & $-0.0625$ & $0.101$ & $0.0633$ & $0.43$ & $367$ \\
\midrule
\multicolumn{6}{l}{Parameter: \textbf{b\_sex}} \\
\midrule
$standard$ & $-0.123$ & $0.216$ & $0.0950$ & $0.49$ & $500$ \\
$odds$  & $-0.0277$ & $0.104$ & $0.177$ & $0.90$ & $265$ \\
$eta$  & $-0.0718$ & $0.101$ & $0.115$ & $0.79$ & $428$ \\
$odds\_alpha$  & $-0.0492$ & $0.0912$ & $0.129$ & $0.86$ & $367$ \\
\midrule
\multicolumn{6}{l}{Parameter: \textbf{sigma}} \\
\midrule
$standard$ & $-0.0619$ & $0.0844$ & $0.0308$ & $0.43$ & $500$ \\
$odds$  & $-0.0713$ & $0.100$ & $0.0522$ & $0.32$ & $265$ \\
$eta$  & $-0.0777$ & $0.0958$ & $0.0327$ & $0.34$ & $428$ \\
$odds\_alpha$  & $-0.0670$ & $0.0905$ & $0.0392$ & $0.38$ & $367$ \\
\bottomrule
\end{tabular}
\end{table}

\subsection{Real Data Example}

Here we present a real data example to demonstrate the use of this method in a realistic large-scale registry data example. We take the example of using Prostate Specific Antigen (PSA) to predict 10-year mortality from prostate cancer (PCa) after diagnosis. We use the Norwegian Prostate Cancer Consortium data set, previously published in \citep{oldenburg2022long}. We pursue this example to demonstrate the behaviour of the proposed inference method on a real data set for a simple predictive model: we do not intend this to be a definitive contribution of rigorous epidemiological analysis to the somewhat contentious ongoing academic debate concerning PSA screening.

We consider the model specification of a univariate logistic regression model, using log-transformed most-recent pre-PCa-diagnosis PSA measurement to predict death from PCa within ten years. It would be of interest to understand the predictive ability of the widely used biomarker on an important clinical outcome. It is understood that the predictive ability varies with age: PSA naturally rises with age, so an elevated PSA at a lower age represents a more severe concern than an equivalent value at an older age. However, the data set represents the wide heterogeneity in PSA screening practices across all of Norway historically, such that it is hard to interpret models trained naively using the entire data set. Several age-related effects are relevant: 

\begin{itemize}
    \item The true underlying distribution of PCa diagnoses is more common in older men, with an older age distribution compared to the general population.
    \item PSA screening is targeted at older men in the population, with the result that PCa cases in younger men are often undetected (often harmlessly). The distribution of ages at diagnosis is thus higher than the true unknown distribution of PCa onset in the general population.
    \item Elevated PSA values at a lower age represent an elevated risk of death from PCa compared to equivalent PSA values at an older age.
    \item We acknowledge several other effects, including the competing risks of death from other causes, tumour stage at diagnosis and frequency of testing. We would reserve a detailed treatment of these effects for an epidemiology article, but we do not expect them to reverse the general trend that elevated PSA at PCa diagnosis is a stronger predictor of 10-year death from PCa in a younger population.
\end{itemize}

There are four distributions of interest: $G$ with density $g(X)$, the distribution of individuals with a PCa diagnosis in the observed data set, $Q$ with density $q(X)$, the distribution of all men 39 or older in the Norwegian population, $P$ with density $p(X)$, the underlying, unobserved distribution of PCa onset in the population, and $H$ with density $h(X)$, the artificially biased subsample of the observed data used for model testing.  Demographic information on the distribution of ages among men in the general Norwegian population was found on the Statistics Norway website and used to construct comparison populations \citep{ssb2026table}. Histograms of example populations used are presented in Figure \ref{fig3:pop_histograms}.

Estimates of the influence of log PSA on 10-year PCa mortality when trained on data from $G$ will have a negative bias relative to the true distribution $P$ as the data set is systematically missing younger men for whom elevated PSA is a greater concern. We are unable to use the reweighting strategy to target the true distribution $P$ as it is unobserved: instead, we propose reweighting to the distribution $Q$, as the general population would be expected to have a lower age distribution than the distribution $P$: this analysis would offer a principled upper bound on the influence of log PSA on PCa mortality to complement the lower bound provided by the observed registry data. Effectively, we consider a hypothetical population in which every Norwegian man 39 or older is equally likely to be diagnosed with PCa. Since we know that PCa is an illness that primarily impacts older men (and practically none under 39), we can therefore assume that this age distribution is downwardly biased relative to the true unobserved population and therefore exhibits a positive bias to the observed data set.

We considered two data paradigm comparisons to investigate the properties of the method. The first verifies the robustness of the reweighting process to target out-of-sample age distributions, while the second represents the projection onto the unknown full population distribution $Q$.

\begin{figure}
    \centering
    \includegraphics[width=0.9\linewidth]{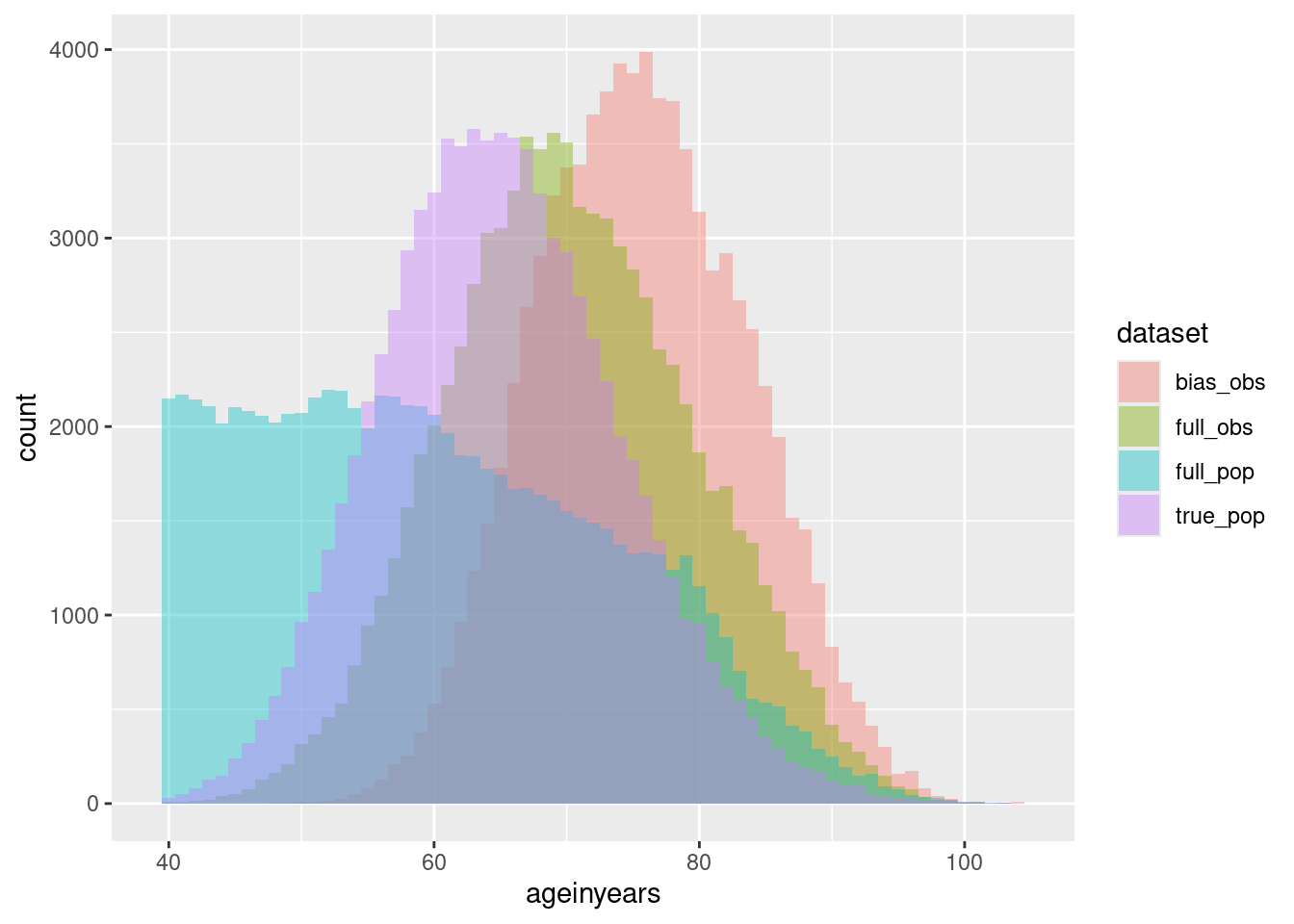}
    \caption{Example histograms of populations used in the PCa example. $bias\_obs$ is $h(X)$,  $full\_obs$ is $g(X)$, $true\_pop$ is $p(X)$, and $full\_pop$ is $q(X)$, with their central tendencies decreasing in that order.}
    \label{fig3:pop_histograms}
\end{figure}

\begin{table}[]
\centering
\caption{Results for the out-of-sample analysis trained on artificially biased registry data, to predict the observed registry distribution. Statistics reported are averaged over random seeds, concerning properties of the posteriors trained under different methods with different data sample sizes N. Properties of posterior means (mu) and sds are reported, with RMSE and bias values calculated relative to the "full" posterior trained on the target data population. EDS is also reported as mean and sd.}
\label{tab3:OOS_PSA}
\begin{tabular}{c|c|c|c|c}
N & method & mu mean(sd;RMSE;bias) & sd mean(sd;RMSE;bias) & EDS \\
\hline
100 & biased      & .569 (.345; .314; -.128) & .313 (.0937; .1; .0816) & 50 \\
100 & ecdf.alpha  & .651 (.424; .342; -.0454) & .351 (.12; .144; .12) & 44.3 (3.05) \\
100 & ecdf        & .677 (.447; .35; -.0194) & .36 (.119; .152; .129) & 39.3 (5.48) \\
100 & full        & .696 (.286; 0; 0) & .231 (.0547; 0; 0) & 100 \\
100 & glm.alpha   & .663 (.434; .355; -.0334) & .35 (.119; .143; .119) & 45.2 (1.62) \\
100 & glm         & .702 (.485; .4; .00544) & .361 (.139; .163; .129) & 41.9 (2.65) \\
\hline
1000 & biased     & .581 (.101; .102; -.0826) & .0848 (.00706; .0193; .0186) & 500 \\
1000 & ecdf.alpha & .626 (.117; .0888; -.0376) & .0931 (.00915; .0278; .027) & 400 (30.2) \\
1000 & ecdf       & .646 (.128; .0949; -.0179) & .0939 (.00914; .0287; .0277) & 305 (58.9) \\
1000 & full       & .664 (.0808; 0; 0) & .0662 (.00341; 0; 0) & 1000 \\
1000 & glm.alpha  & .617 (.108; .0871; -.0469) & .0922 (.0082; .0267; .026) & 458 (3.24) \\
1000 & glm        & .628 (.112; .0878; -.0357) & .094 (.00863; .0286; .0278) & 430 (5.14) \\
\hline
10000 & biased    & .594 (.0276; .0749; -.0721) & .0265 (6.57e-4; .00565; .0056) & 5000 \\
10000 & ecdf.alpha & .651 (.0285; .0292; -.0152) & .0297 (7.04e-4; .00882; .00878) & 3.79e3 (188) \\
10000 & ecdf      & .675 (.035; .0344; .00951) & .0301 (8.83e-4; .0092; .00917) & 2.51e3 (418) \\
10000 & full      & .666 (.017; 0; 0) & .0209 (4.6e-4; 0; 0) & 10000 \\
10000 & glm.alpha & .643 (.0288; .0329; -.0227) & .0292 (6.77e-4; .00836; .00833) & 4.59e3 (14.1) \\
10000 & glm       & .658 (.03; .0269; -.00777) & .0299 (1.01e-3; .00907; .00902) & 4.31e3 (23.1) \\
\hline
81429 & biased    & .586 (.00443; .0739; -.0738) & .00935 (1.2e-4; .00207; .00206) & 40715 \\
81429 & ecdf.alpha & .643 (.00656; .0182; -.017) & .0103 (2.01e-4; .00301; .003) & 2.98e4 (974) \\
81429 & ecdf      & .67 (.00988; .0142; .0104) & .0106 (1.67e-4; .00328; .00327) & 1.72e4 (2.74e3) \\
81429 & full      & .66 (1.71e-4; 0; 0) & .00729 (1.13e-4; 0; 0) & 81429 \\
81429 & glm.alpha & .633 (.00526; .0269; -.0264) & .0102 (1.51e-4; .00292; .00292) & 3.74e4 (44.4) \\
81429 & glm       & .647 (.00573; .0137; -.0125) & .0105 (1.43e-4; .00316; .00316) & 3.51e4 (73.9) \\
\end{tabular}
\end{table}

\begin{table}[]
\centering
\caption{Results for the analysis trained on the full observed registry data, to predict the effect size in the general population. Statistics reported are averaged over random seeds, concerning properties of the posteriors trained under different methods with different data subsample sizes N. Properties of posterior means (mu) and sds are reported. EDS is also reported as mean and sd.}
\label{tab4:fullpop_PSA}
\begin{tabular}{c|c|c|c|c}
N & method & mu mean(sd) & sd mean(sd) & EDS \\
\hline
100 & ecdf.alpha & .8 (.413) & .285 (.079) & 62.9 (6.49) \\
100 & ecdf       & .841 (.492) & .299 (.0866) & 40.7 (7.84) \\
100 & full       & .696 (.286) & .231 (.0547) & 100 \\
100 & glm.alpha  & .847 (.433) & .295 (.0838) & 77.1 (5.97) \\
100 & glm        & .895 (.507) & .312 (.0972) & 64.5 (8.27) \\
\hline
1000 & ecdf.alpha & .766 (.109) & .0799 (5.71e-3) & 462 (44.9) \\
1000 & ecdf       & .825 (.148) & .082 (7.47e-3) & 203 (48.1) \\
1000 & full       & .664 (.0808) & .0662 (3.41e-3) & 1000 \\
1000 & glm.alpha  & .735 (.102) & .0778 (5.67e-3) & 775 (28.7) \\
1000 & glm        & .757 (.114) & .0795 (6.39e-3) & 646 (41.1) \\
\hline
10000 & ecdf.alpha & .771 (.0295) & .0256 (7.55e-4) & 4.22e3 (321) \\
10000 & ecdf       & .831 (.0565) & .026 (1.17e-3) & 1.65e3 (316) \\
10000 & full       & .666 (.017) & .0209 (4.6e-4) & 10000 \\
10000 & glm.alpha  & .746 (.0193) & .0247 (5.96e-4) & 7.72e3 (75.6) \\
10000 & glm        & .768 (.0225) & .0252 (6.24e-4) & 6.41e3 (107) \\
\hline
81429 & ecdf.alpha & .763 (2.36e-4) & .00893 (1.67e-4) & 3.41e4 (186) \\
81429 & ecdf       & .819 (4.01e-4) & .00892 (1.3e-4) & 1.31e4 (142) \\
81429 & full       & .66 (1.71e-4) & .00729 (1.13e-4) & 81429 \\
81429 & glm.alpha  & .737 (4.24e-4) & .0086 (9.75e-5) & 6.31e4 (143) \\
81429 & glm        & .759 (4.5e-4) & .00879 (1.34e-4) & 5.25e4 (201) \\
\\\end{tabular}
\end{table}

\subsection{Out-of-Sample Validation}

For the first situation, we intend to demonstrate the ability of the reweighting methods to recover accurate posterior when injecting artificial bias into real data. We selected a random subsample of the observed data set and defined it as the unbiased population for the purposes of having an observed true distribution $G$. We subsequently constructed an artificial subsample from a biased distribution $H$, by sampling individuals with replacement from the unbiased population, according to a probability proportional to the empirical cdf of the age of the unbiased data $G$, such that older populations were more likely to be selected. We then compared models run on the unbiased population to those run on the biased population with different reweighting techniques. The reweighting techniques include the glm-derived weights, those derived from the ecdfs of the distributions (accessible here since we are comparing using a single, continuous variable), and including an alpha=0.75 tempering of the weights. The procedure was iterated over random seeds, and we further derive biases and RMSEs relative to the statistics derived from the full, unbiased population. We consider replicating the mean posterior values of the unbiased sample to be desirable, but since the models are trained with differing amounts of data, we do not necessarily expect the posteriors to exhibit the same degree of breadth (sd).  We use it here as a point of reference to test the accuracy and consistency of the classifier-trained method. We consider differing data scales, where the unbiased data set consist of 100, 1000, 10000 and the full 81429 data points from the registry data.

Results are pesented in Table \ref{tab3:OOS_PSA}. We see that for the posterior mean, the RMSE and bias decrease for all methods (except $biased$) for increasing amounts of data. All of the methods (except for $ecdf$ with larger N) exhibited a negative bias, implying a tendency to undercorrect. The alpha-tempering of the reweighting appeared to benefit for smaller N, but not for the larger data sets. The glm-derived weights provided the lowest-RMSE posteriors in all but the smallest data sets, suggesting that once there is enough data to train the classifiers adequately then they provide desirable results.

The results concerning the posterior sd are harder to interpret, as it is not clear that the posteriors trained on a reweighted subsampled would be expected to converge in distribution to those trained on the full data set. However, some trends are clear: increasing the amount of data reduced the mean and variation in posterior sd for all methods, and all methods exhibited a positive bias relative to the full data analysis. The alpha tempering consistently reduced the mean posterior sd but increased the variation, while the glm-derived weighting procedures often exhibited smaller posterior sds with less variation compared to the ecdf-derived weights.

We see that the EDS increased with the amount of data. However, for increasing N the EDS decreased relative to total number of data points available, without ever collapsing to a very small number. Including the tempering alpha increased the EDS, and the glm-derived weights consistently gave a higher and more consistent EDS compared to the ecdf-derived weights.

\subsection{Full Population Projection}

The second situation involves treating the observed data set from distribution $G$ as a biased sample  from the unobserved true distribution $P$, and projecting down to the general-population demographic distribution $Q$ of all men 39 or older, as described previously. There is no observed ground truth in this context, for either of the true underlying PCa onset distribution or full-population data subsample, so we will restrict ourselves to comparing the properties of the posteriors generated by the reweighting methods. 

We report the results in Table \ref{tab4:fullpop_PSA}. In the absence of a ground truth for this analysis, we present means and standard deviation of the posterior mean and sd, and the EDS. We see that the variation on the posterior means and sds decreases with increasing amounts of the data, and the EDS increases. Unlike the previous example, we do not see the EDS decrease relative to the total amount of data.

Relative to the baseline ``full'' estimate trained on the biased registry data, we see that the glm-derived weights tends to provide more conservative updates than the ecdf-derived weights, and that the alpha-tempering further decreases the size of the update. The glm-derived weights thus provide the tightest bounds on the unknown true odds ratio (0.66-0.759), with no evidence to suggest that they should be less accurate than the other methods.

 \section{Discussion}

We consider the results presented to be a convincing empirical representation of the applicability of the IPW-style reweighting methods developed for Bayesian inference for large-scale practical problems. Some uncertainty remains concerning optimal design choices for methodological components, and for general applicability to practical problems outside of those considered here.

There is space for refinement concerning classifier choice and design for weight estimation: out-of-the-box glm models work adequately for the lower-dimensional problems consider here, but the applicability to higher-dimensional parameter spaces could be explored through the use of regularised logistic regression models, or more complex machine learning models. The question of weight stability (present in the EDS statistic) appears most concerning in smaller-data domains: we might expect a simpler parametric classifier to make best use of small amounts of data, conditional on its parametrisation being well-specified. The consequences of classifier misspecification and the potential resulting bias in the inference would be of interest to explore in future work. Similarly, while we have provided theoretical results concerning noise introduced by using classifiers to estimate density ratios, the empirical consequences of exactly how much noise is admissible for which practical problems are still not fully known. It is also interesting and possibly surprising that the classifier-derived weights appear to outperform those derived from the observed ecdfs of the data distributions: it is possible that the classifiers impose regularity or smoothness on the density ratio modelling, which helps stabilise the derived weights.

Further, details of how to practically process the weights are also of interest: the clipping of the extreme values (we have tried log-odds of $\pm10$ here), or choice of the weight tempering constant ($\alpha=0.75$ here) may vary from problem to problem. We also feel that use of cross-validation to estimate the weights is a good idea in the data domains we have considered, but this may become prohibitive computationally in larger data sets, and may have to be abandoned at the expense of potentially introducing bias. The question also remains as to whether the same data can be used for training the weight-estimating classifiers and for performing the belief update: we think this is a defensible choice in our context, but further robustness may be obtained by separating these inference steps entirely.

We have kept the inference machinery within the well-established brms software: this is a very powerful HMC-driven framework encompassing many known parametric Bayesian regression models, but the question remains as to how these will generalise to other model specifications or inference methods. Combining with Sequential Monte Carlo would be of interest, considering the shared logic and challenges of potential collapse into a small number of samples dominating the inference. The relationship with variational methods is likely to be fruitful, as both methods rely on KL-divergences to motivate inferential choices. Further extensions to Stochastic Variational Inference and streaming methods would raise further questions about how to learn the weights when data is received on-the-fly. The extension to non-i.i.d. models such as network models or time series would also be of interest, since the theory concerning the independent projection of the contributions to the KL divergences may not translate automatically.

We consider this to be a demonstration of generalised Bayesian reasoning extended to problems outside of the traditional concerns of model misspecification. Skepticism of statistical practitioners concerning post-Bayesian models can be dispelled considering the well-known biases of standard Bayesian updates in the presence of selection bias, and the theoretical results here demonstrating unbiased convergence (conditional on successful training of the classifiers), while maintaining other attractive features of Bayesian updating. Given that the machinery of IPW is very mature in the frequentist literature, we consider there to be clear extensions to other potential sources of biases, including misspecification and causality, but we reserve these extensions for future work.

\section{Conclusion}

We consider the results presented to be a successful application of the methods developed on realistic problems with simulated and real data. The theory and practice of Bayesian inference and frequentist IPW are mature and active research fields: this work provides the theoretical and empirical justification to combine the two. The methods are also immediately applicable using widely used software: here we use off-the-shelf glm classifiers and the brms inference package, but they can certainly be applied to more complex, application-tailored specifications of either the classifier or the statistical model. This work opens the door to the use of IPW within contemporary post-Bayesian inference approaches.

\section{Acknowledgements}

We would like to thank Jan Oldenburg and Johan Bjerner of Akershus University Hospital and Fürst medical Laboratory for collecting the prostate cancer data set, granting permission for its use in the article, and giving feedback on the manuscript. We would also like to thank Soili Marianne Lehto for motivating our initial interest in selection bias as a practical challenge. Further gratitude to Johan de Aguas and Leiv Rønneberg for comments on the idea and manuscript.

\appendix

\section{Classifier Pragmatics}\label{app:classprag}
In practice the empirically trained classifier will inject some noise into the inference process and thus potentially introduce instabilities. We have a few recommendations to help address these challenges:

\begin{itemize}
\item We recommend clipping the maximum and minimum values of the weights to avoid collapse of the update to rely on a very small number of data points (e.g. log odds minimum and maximum values of -10 and 10, respectively).
\item It may be desirable to normalise the weights before using such that $\sum_{i=1}^n w_i = n$ to help ensure numerical stability during the update itself. 
\item If \(r\) is unbounded or poorly estimated, tempered odds \(r^\alpha\) (\(0<\alpha<1\)) can stabilise inference at the cost of changing the target distribution (you then project onto the density proportional to \(q(x)^{\alpha}g(x)^{1-\alpha}\) rather than \(q\)).
\item We recommend always reporting EDS and weight histograms, to verify a sufficiently diverse update not dominated by a very small number of observed data points.
\item Cross-fitting (K-fold) for $w_i$ is an effective practical method to attenuate overfitting bias and to weaken the required convergence rates of the classifier.
\end{itemize}

\section{Setup and Notation for Theorem Proofs}\label{App:setup}

Let \(X_1,\dots,X_n\) be i.i.d.\ draws from \(G\) with density \(g\) (w.r.t.\ a dominating measure \(\mu\)). Let \(\mathcal P=\{p_\theta:\theta\in\Theta\}\) be a parametric family with log-density \(\ell_\theta(x)=\log p_\theta(x)\). Fix a measurable, nonnegative weight function \(w:\mathcal X\to[0,\infty)\). Define the empirical weighted risk
\[
\mathcal L_n(\theta;w) \;=\; \frac{1}{n}\sum_{i=1}^n w(X_i)\,(-\ell_\theta(X_i)),
\]
and population risk
\[
\mathcal L(\theta;w) \;=\; \mathbb{E}_G[w(X)(-\ell_\theta(X))] = \int -\ell_\theta(x)\,w(x)g(x)\,d\mu(x).
\]
When \(w(x)\propto q(x)/g(x)\) this equals \(\mathbb{E}_Q[-\ell_\theta(X)]\).

Let \(\theta^\star\in\arg\min_{\theta\in\Theta}\mathcal L(\theta;w)\) denote a population minimiser (pseudo-true parameter). Scores and Hessians:
\[
s_\theta(x):=\nabla_\theta \ell_\theta(x),\qquad H_\theta(x):=-\nabla^2_\theta\ell_\theta(x).
\]

We will use standard empirical process notation. For a class of measurable functions \(\mathcal F\) we write \(\mathbb{G}_n(f)=\sqrt{n}(\mathbb{P}_n-\mathbb{P})f\) for the empirical process. References: \citet{vdV98,vdVW96}.

\section{Assumptions}\label{app:assump}

Below the assumptions are stated in empirically checkable form (Donsker / GC / Lindeberg / moment bounds). In practice, for parametric models these reduce to routine conditions.

\begin{assump}[Parameter regularity]\label{A:param}
\(\Theta\subset\mathbb{R}^d\) open. For \(G\)-a.e.\ \(x\), \(\theta\mapsto \ell_\theta(x)\) is twice continuously differentiable in a neighbourhood \(N\) of \(\theta^\star\). The maps \(x\mapsto s_\theta(x)\) and \(x\mapsto H_\theta(x)\) are measurable for \(\theta\in N\).
\end{assump}

\begin{assump}[Identifiability and nonsingularity]\label{A:ident}
The function \(\theta\mapsto \mathcal L(\theta;w)\) has a unique minimiser \(\theta^\star\in N\). Define
\[
I := \mathbb{E}_G[w(X)H_{\theta^\star}(X)].
\]
Assume \(I\) is positive definite.
\end{assump}

\begin{assump}[Envelope and moments]\label{A:env}
There exists an integrable envelope \(M:\mathcal X\to[0,\infty)\) with \(\mathbb{E}_G[w(X)M(X)]<\infty\) such that for \(\theta\in N\),
\[
\|s_\theta(x)\| \le M(x), \qquad \|H_\theta(x)\|\le M(x),\quad \text{for }G\text{-a.e. }x.
\]
Also assume \(\mathbb{E}_G[w(X)^2\|s_{\theta^\star}(X)\|^2]<\infty\).
\end{assump}

\begin{assump}[Uniform LLN / Glivenko--Cantelli]\label{A:GC}
The class \(\mathcal F=\{ x\mapsto w(x)\ell_\theta(x) : \theta\in N\}\) is \(G\)-Glivenko--Cantelli:
\[
\sup_{\theta\in N}\left| (\mathbb{P}_n-\mathbb{P})[w\ell_\theta]\right| \xrightarrow{a.s.} 0.
\]
A sufficient condition is: \(\mathcal F\) has an integrable envelope and finite bracketing entropy (see \citet[Ch.\ 2]{vdVW96}).
\end{assump}

\begin{assump}[CLT for weighted scores]\label{A:CLT}
The class \(\mathcal S=\{x\mapsto w(x)s_{\theta^\star}(x)^\top v : v\in\mathbb{R}^d, \|v\|=1\}\) is Donsker and
\[
\mathbb{G}_n\big(w s_{\theta^\star}\big) \xrightarrow{d} \mathcal{N}(0,J),
\]
with
\[
J = \mathbb{E}_G\big[w(X)^2 s_{\theta^\star}(X) s_{\theta^\star}(X)^\top\big] < \infty.
\]
Sufficient: \(\mathcal S\) is \(G\)-Donsker (e.g. finite VC / finite entropy) and second moments finite.
\end{assump}

\begin{assump}[Prior]\label{A:prior}
Prior \(\pi\) has a continuous and strictly positive density in a neighbourhood of \(\theta^\star\).
\end{assump}

\begin{remark}
For parametric families with smooth log-densities and bounded parameter sets, Assumptions \ref{A:param}--\ref{A:CLT} are straightforward to verify. For nonparametric classifiers / weights, verifying Donsker conditions is more delicate; cross-fitting will be recommended to relax such requirements (see Section \ref{app:plugin} below and \citet{chernozhukov2018double}).
\end{remark}

\begin{assump}[Parametric ratio estimator]\label{A:ratio}
There exists a finite-dimensional parameter \(\gamma\in\Gamma\subset\mathbb{R}^p\) and a map \(r_\gamma(x)\) such that the true ratio \(r(x)=q(x)/g(x)=r_{\gamma_0}(x)\) for some \(\gamma_0\). The estimator \(\hat\gamma_m\) (possibly based on an independent sample of size \(m\)) satisfies
\[
\sqrt{m}(\hat\gamma_m-\gamma_0)\ \xrightarrow{d}\ N(0,\Sigma_\gamma).
\]
Further \(r_\gamma(x)\) is twice continuously differentiable in \(\gamma\) in a neighbourhood of \(\gamma_0\), and \(\mathbb{E}_G[\|\dot r_{\gamma_0}(X)\|^2\|s_{\theta^\star}(X)\|^2]<\infty\).
\end{assump}

\section{Existence and uniqueness of the pseudo-true parameter: Proof of Theorem \ref{thm:exist}}\label{app:exist}

 \begin{proof} By Assumption \ref{A:param}, for $G$-a.e.\ $x$, the map $\theta\mapsto \ell_\theta(x)$ is continuous on the neighbourhood $N$ of $\theta^\star$. By Assumption \ref{A:GC}, the class $\{x\mapsto w(x)\ell_\theta(x):\theta\in N\}$ admits an integrable envelope $F(x)$ such that $\mathbb{E}_G[F(X)]<\infty$. Hence, for any sequence $\theta_n\to\theta$ in $N$, \[ w(X)\ell_{\theta_n}(X) \to w(X)\ell_\theta(X) \quad\text{for $G$-a.e.\ $X$}, \] and $|w(X)\ell_{\theta_n}(X)|\le F(X)$ for all $n$. Dominated convergence then gives \[ \mathcal L(\theta_n;w) = \mathbb{E}_G[w(X)(-\ell_{\theta_n}(X))] \to \mathbb{E}_G[w(X)(-\ell_\theta(X))] = \mathcal L(\theta;w), \] so $\mathcal L(\cdot;w)$ is continuous on $N$. Let $K\subset N$ be compact. A continuous function on a compact set attains its minimum, so $\mathcal L(\cdot;w)$ has at least one minimiser on $K$. If $\Theta$ is closed and $\mathcal L(\cdot;w)$ is coercive, then by a standard argument (e.g.\ \citet[Theorem 2.1]{rockafellar1970}) $\mathcal L(\cdot;w)$ attains its global minimum on $\Theta$. Finally, if $\mathcal L(\cdot;w)$ is strictly convex on $\Theta$, then there can be at most one minimiser on $\Theta$, so the minimiser is unique. \end{proof}
 
\section{Consistency (posterior concentration): Proof of Theorem \ref{thm:consistency}}\label{app:consistency}

\begin{proof} Fix $\varepsilon>0$ small enough that $B(\theta^\star,\varepsilon)\subset N$. By Assumption \ref{A:ident} there exists \[ \Delta_\varepsilon :=\inf_{\theta\in N,\ \|\theta-\theta^\star\|\ge\varepsilon} \{\mathcal L(\theta;w)-\mathcal L(\theta^\star;w)\} >0. \] Set $\eta:=\Delta_\varepsilon/4>0$. By Assumption \ref{A:GC}, \[ \sup_{\theta\in N}|\mathcal L_n(\theta;w)-\mathcal L(\theta;w)| \xrightarrow{a.s.} 0. \] Hence, for any $\delta>0$ there exists $n_0$ such that, with probability at least $1-\delta$, for all $n\ge n_0$, \[ \sup_{\theta\in N}|\mathcal L_n(\theta;w)-\mathcal L(\theta;w)| < \eta. \] Work on this high-probability event. For any $\theta\in N$ with $\|\theta-\theta^\star\|\ge\varepsilon$, \[ \mathcal L_n(\theta;w) \ge \mathcal L(\theta;w) - \eta \ge \mathcal L(\theta^\star;w) + 3\eta. \] Also, \[ \mathcal L_n(\theta^\star;w) \le \mathcal L(\theta^\star;w) + \eta. \] Thus \[ \mathcal L_n(\theta;w) \ge \mathcal L_n(\theta^\star;w) + 2\eta, \quad \forall\,\theta\in N:\ \|\theta-\theta^\star\|\ge\varepsilon. \] Similarly, for any $\theta$ with $\|\theta-\theta^\star\|\le\varepsilon/2$, \[ \mathcal L_n(\theta;w) \le \mathcal L(\theta;w) + \eta \le \mathcal L(\theta^\star;w) + \eta \le \mathcal L_n(\theta^\star;w)+2\eta. \] Consider the posterior ratio \[ R_n(\varepsilon) := \frac{\int_{\|\theta-\theta^\star\|\ge\varepsilon} \pi(\theta) \exp\{-n\mathcal L_n(\theta;w)\}\,d\theta} {\int_{\|\theta-\theta^\star\|\le\varepsilon/2} \pi(\theta) \exp\{-n\mathcal L_n(\theta;w)\}\,d\theta}. \] For the numerator, \[ \int_{\|\theta-\theta^\star\|\ge\varepsilon} \pi(\theta) \exp\{-n\mathcal L_n(\theta;w)\}\,d\theta \le \exp\{-n(\mathcal L_n(\theta^\star;w)+2\eta)\} \int \pi(\theta)\,d\theta = \exp\{-n(\mathcal L_n(\theta^\star;w)+2\eta)\}. \] Since $\mathcal L_n(\theta;w)\le \mathcal L_n(\theta^\star;w)+2\eta$ for all $\|\theta-\theta^\star\|\le\varepsilon/2$ (established above), the exponential weight is bounded below by $\exp\{-n(\mathcal L_n(\theta^\star;w)+2\eta)\}$ on this neighbourhood. For the denominator, by Assumption \ref{A:prior}, $\pi$ is continuous and strictly positive at $\theta^\star$, so there exists $c_\pi>0$ such that \[ \inf_{\|\theta-\theta^\star\|\le\varepsilon/2}\pi(\theta)\ge c_\pi. \] Then \[ \begin{aligned} \int_{\|\theta-\theta^\star\|\le\varepsilon/2} \pi(\theta) \exp\{-n\mathcal L_n(\theta;w)\}\,d\theta &\ge \exp\{-n(\mathcal L_n(\theta^\star;w)+2\eta)\} \int_{\|\theta-\theta^\star\|\le\varepsilon/2} \pi(\theta)\,d\theta \\ &\ge c_\pi\,\mathrm{Vol}(B(\theta^\star,\varepsilon/2))\, \exp\{-n(\mathcal L_n(\theta^\star;w)+2\eta)\}. \end{aligned} \] Combining, we get \[ R_n(\varepsilon) \le \frac{1}{c_\pi\,\mathrm{Vol}(B(\theta^\star,\varepsilon/2))} \exp\{-2n\eta\} \xrightarrow[n\to\infty]{} 0. \] Since this bound holds with probability at least $1-\delta$ for any fixed $\delta>0$, it follows that $R_n(\varepsilon)\xrightarrow{P}0$, which implies \[ \pi_n\big(\{\theta:\|\theta-\theta^\star\|>\varepsilon\}\big) \xrightarrow{P}0. \] \end{proof}

\section{Asymptotic normality: sandwich variance and BvM: Proof of Theorem \ref{thm:asymp}}\label{app:asymp}

We give a rigorous M–estimation-based proof for the asymptotic distribution of the minimiser and state the Bernstein–von Mises approximation for the generalised posterior; our argument follows \citet[Ch.\ 5]{vdV98} and \citet{kleijn2012bernstein} for the misspecified case.

\begin{proof} \textbf{(i) Asymptotic normality of $\hat\theta_n$.} By Assumption \ref{A:ident} and differentiability under the integral sign, $\theta^\star$ minimises $\mathcal L(\cdot;w)$ and satisfies the population first-order condition \[ \nabla_\theta \mathcal L(\theta^\star;w) = -\mathbb{E}_G[w(X)s_{\theta^\star}(X)] = 0. \] Write the empirical score as \[ \nabla_\theta \mathcal L_n(\theta^\star;w) = -\mathbb{P}_n[w s_{\theta^\star}] = -(\mathbb{P}_n-\mathbb{P})[w s_{\theta^\star}] = -\frac{1}{\sqrt{n}}\mathbb{G}_n(w s_{\theta^\star}), \] where $\mathbb{G}_n$ is the empirical process. By Assumption \ref{A:CLT}, \[ \mathbb{G}_n(w s_{\theta^\star}) \xrightarrow{d} \mathcal N(0,J), \] so \[ \sqrt{n}\,\nabla_\theta \mathcal L_n(\theta^\star;w) \xrightarrow{d} \mathcal N(0,J). \] By Taylor expansion of the empirical score around $\theta^\star$, for each $n$ there exists $\dot\theta_n$ on the line segment between $\hat\theta_n$ and $\theta^\star$ such that \[ 0 = \nabla_\theta\mathcal L_n(\hat\theta_n;w) = \nabla_\theta\mathcal L_n(\theta^\star;w) + \nabla^2_\theta\mathcal L_n(\dot\theta_n;w)\,(\hat\theta_n-\theta^\star). \] Thus \[ \sqrt{n}(\hat\theta_n-\theta^\star) = -\{\nabla^2_\theta\mathcal L_n(\dot\theta_n;w)\}^{-1} \sqrt{n}\,\nabla_\theta\mathcal L_n(\theta^\star;w). \] We have \[ \nabla^2_\theta\mathcal L_n(\theta;w) = \mathbb{P}_n[w H_\theta], \qquad \mathbb{E}_G[w H_\theta] =: I(\theta). \] By the Glivenko--Cantelli assumption for $w H_\theta$, \[ \sup_{\theta\in N} \|\nabla^2_\theta\mathcal L_n(\theta;w) - I(\theta)\| \xrightarrow{P} 0. \] Since $I(\theta)$ is continuous in $\theta$ and $I(\theta^\star)=I$ is positive definite (Assumption \ref{A:ident}), and $\dot\theta_n\to\theta^\star$ in probability (because $\hat\theta_n\to\theta^\star$ in probability and $\dot\theta_n$ lies between them), it follows that \[ \nabla^2_\theta\mathcal L_n(\dot\theta_n;w) \xrightarrow{P} I, \] and hence \[ \{\nabla^2_\theta\mathcal L_n(\dot\theta_n;w)\}^{-1} \xrightarrow{P} I^{-1}. \] By Slutsky's lemma, \[ \sqrt{n}(\hat\theta_n-\theta^\star) = -\{\nabla^2_\theta\mathcal L_n(\dot\theta_n;w)\}^{-1} \sqrt{n}\,\nabla_\theta\mathcal L_n(\theta^\star;w) \xrightarrow{d} \mathcal N(0,I^{-1}JI^{-1}). \] \medskip \noindent \textbf{(ii) Local posterior Gaussian approximation.} Under Assumptions \ref{A:param}, \ref{A:env}, \ref{A:GC}, \ref{A:CLT}, \ref{A:ident}, and \ref{A:prior}, the generalised posterior based on the loss $\mathcal L_n(\theta;w)$ satisfies the misspecified Bernstein--von Mises conditions of \citet[Theorem 2.1]{kleijn2012bernstein}. In particular, there exists a local quadratic expansion of the form \[ n\{\mathcal L_n(\theta;w)-\mathcal L_n(\hat\theta_n;w)\} = \frac{n}{2}(\theta-\hat\theta_n)^\top I (\theta-\hat\theta_n) + n r_n(\theta), \] with \[ \sup_{\|\theta-\hat\theta_n\|\le M n^{-1/2}} |r_n(\theta)| = o_p(n^{-1}), \quad \text{for each fixed }M<\infty. \] Assumption \ref{A:prior} ensures that the prior density is continuous and strictly positive at $\theta^\star$ (and hence at $\hat\theta_n$ with probability tending to one). Applying \citet[Theorem 2.1]{kleijn2012bernstein} yields that, for each fixed $M>0$, \[ \sup_{B\subset\{u:\|u\|\le M\}} \left| \pi_n\big(\sqrt{n}(\theta-\hat\theta_n)\in B\big) - \mathcal N(0,I^{-1})(B) \right| \xrightarrow{P} 0, \] which is the desired local Bernstein--von Mises approximation with covariance matrix $I^{-1}/n$. \end{proof}

\section{Plug-in (estimated) weights effect: Proof of Theorem \ref{thm:plugin}}\label{app:plugin}

We consider the case where \(w\) is unknown and estimated by \(\hat w_n\), typically by training a classifier to distinguish samples from \(Q\) and \(G\), or by direct density-ratio estimation. We give a rigorous result assuming a parametric estimator for the ratio; we then comment on nonparametric and cross-fitting remedies.

\begin{proof} Write the empirical score at $\theta^\star$ with plug-in weights: \[ S_n(\theta^\star;\hat r_n) := \nabla_\theta\mathcal L_n(\theta^\star;\hat r_n) = -\frac{1}{n}\sum_{i=1}^n \hat r_n(X_i)s_{\theta^\star}(X_i). \] Decompose \[ S_n(\theta^\star;\hat r_n) = -\frac{1}{n}\sum_{i=1}^n r(X_i)s_{\theta^\star}(X_i) - \frac{1}{n}\sum_{i=1}^n\big(r_{\hat\gamma_m}(X_i)-r_{\gamma_0}(X_i)\big) s_{\theta^\star}(X_i) =: S_n^{(1)} + S_n^{(2)}. \] \textbf{First term.} As in the proof of Theorem \ref{thm:asymp}, $\theta^\star$ minimises $\mathcal L(\cdot;r)$ and satisfies $\mathbb{E}_G[r(X)s_{\theta^\star}(X)]=0$. Hence \[ \sqrt{n}S_n^{(1)} = -\mathbb{G}_n(r s_{\theta^\star}) \xrightarrow{d} \mathcal N(0,J), \] by Assumption \ref{A:CLT} (with $w=r$). \textbf{Second term.} By Assumption \ref{A:ratio}, $r_\gamma(x)$ is twice continuously differentiable in $\gamma$ in a neighbourhood of $\gamma_0$. For each fixed $x$, a second-order Taylor expansion around $\gamma_0$ yields \[ r_{\hat\gamma_m}(x)-r_{\gamma_0}(x) = \dot r_{\gamma_0}(x)^\top(\hat\gamma_m-\gamma_0) + \tfrac{1}{2}(\hat\gamma_m-\gamma_0)^\top \ddot r_{\tilde\gamma_m(x)}(x) (\hat\gamma_m-\gamma_0), \] for some $\tilde\gamma_m(x)$ between $\gamma_0$ and $\hat\gamma_m$. Thus \[ \begin{aligned} S_n^{(2)} &= -\left(\frac{1}{n}\sum_{i=1}^n \dot r_{\gamma_0}(X_i)s_{\theta^\star}(X_i)^\top\right) (\hat\gamma_m-\gamma_0) \\ &\quad - \frac{1}{2n}\sum_{i=1}^n (\hat\gamma_m-\gamma_0)^\top \ddot r_{\tilde\gamma_m(X_i)}(X_i) (\hat\gamma_m-\gamma_0) s_{\theta^\star}(X_i). \end{aligned} \] By the law of large numbers and the integrability condition in Assumption \ref{A:ratio}, \[ \frac{1}{n}\sum_{i=1}^n \dot r_{\gamma_0}(X_i)s_{\theta^\star}(X_i)^\top \xrightarrow{P} A. \] Moreover, $\sqrt{m}(\hat\gamma_m-\gamma_0)\xrightarrow{d}\mathcal N(0,\Sigma_\gamma)$, so $\|\hat\gamma_m-\gamma_0\|=O_p(m^{-1/2})$, and the quadratic term in $S_n^{(2)}$ is of order \[ O_p\big(\|\hat\gamma_m-\gamma_0\|^2\big) = O_p\big(m^{-1}\big). \] Hence \[ \sqrt{n}\times \frac{1}{n}\sum_{i=1}^n (\hat\gamma_m-\gamma_0)^\top \ddot r_{\tilde\gamma_m(X_i)}(X_i) (\hat\gamma_m-\gamma_0) s_{\theta^\star}(X_i) = O_p\Big(\frac{\sqrt{n}}{m}\Big) \to 0 \] if $m/n\to c\in(0,\infty]$. Therefore \[ \sqrt{n}S_n^{(2)} = -\left(\frac{1}{n}\sum_{i=1}^n \dot r_{\gamma_0}(X_i)s_{\theta^\star}(X_i)^\top\right) \sqrt{n}(\hat\gamma_m-\gamma_0) + o_p(1). \] Write \[ \sqrt{n}(\hat\gamma_m-\gamma_0) = \sqrt{\frac{n}{m}}\cdot \sqrt{m}(\hat\gamma_m-\gamma_0) \xrightarrow{d} \sqrt{c}\,Z_\gamma, \] where $Z_\gamma\sim\mathcal N(0,\Sigma_\gamma)$ and $c=\lim m/n$. Conditional on $\hat\gamma_m$, the average $\frac{1}{n}\sum \dot r_{\gamma_0}s_{\theta^\star}^\top$ converges in probability to $A$, and $\hat\gamma_m$ is independent of $\{X_i\}$ by assumption. Thus, by Slutsky’s lemma, \[ \sqrt{n}S_n^{(2)} \xrightarrow{d} -\sqrt{c}\,A Z_\gamma, \] which is normal with mean $0$ and covariance $K_c = c\,A\Sigma_\gamma A^\top$. Since $\hat\gamma_m$ is independent of $\{X_i\}$, the limits $\mathbb{G}_n(r s_{\theta^\star})$ and $Z_\gamma$ are independent. Hence \[ \sqrt{n}S_n(\theta^\star;\hat r_n) = \sqrt{n}S_n^{(1)} + \sqrt{n}S_n^{(2)} \xrightarrow{d} \mathcal N(0,J+K_c). \] \textbf{From score to estimator.} By the same Taylor expansion argument as in the proof of Theorem \ref{thm:asymp}, there exists $\bar\theta_n$ between $\tilde\theta_n$ and $\theta^\star$ such that \[ 0 = \nabla_\theta\mathcal L_n(\tilde\theta_n;\hat r_n) = S_n(\theta^\star;\hat r_n) + \nabla^2_\theta\mathcal L_n(\bar\theta_n;\hat r_n) (\tilde\theta_n-\theta^\star). \] Since $\hat r_n\to r$ in probability and the Glivenko--Cantelli conditions hold for $w H_\theta$, it follows that $\nabla^2_\theta\mathcal L_n(\bar\theta_n;\hat r_n)\xrightarrow{P} I$. Therefore \[ \sqrt{n}(\tilde\theta_n-\theta^\star) = -\{\nabla^2_\theta\mathcal L_n(\bar\theta_n;\hat r_n)\}^{-1} \sqrt{n}S_n(\theta^\star;\hat r_n) \xrightarrow{d} \mathcal N\big(0,I^{-1}(J+K_c)I^{-1}\big). \] When $m/n\to\infty$ we have $c=0$ and $K_c=0$, recovering the variance $I^{-1}JI^{-1}$ from Theorem \ref{thm:asymp}. When $m/n\to c\in(0,\infty)$, the variance inflation $K_c$ appears as stated. \end{proof}
\begin{remark}
If \(\hat\gamma_m\) is estimated from the \emph{same} sample \(\{X_i\}\), dependence must be accounted for; cross-fitting (sample-splitting) is a standard remedy restoring the independence structure and allowing analogous conclusions; see \citet{chernozhukov2018double} for general conditions and proofs in such semiparametric settings.
\end{remark}

\section{Conditions on $G$ and $Q$ for feasible ratio estimation}\label{app:gq}

We summarise necessary / sufficient practical conditions:

\begin{enumerate}
\item \textbf{Overlap / positivity:} \(Q\ll G\) (i.e.\ \(r(x)=q(x)/g(x)\) exists finite \(G\)-a.e.). If \(q>0\) on sets where \(g=0\) no estimator from \(G\)-samples can recover \(Q\) there.

\item \textbf{Finite weighted moments:} Ensure \(\mathbb{E}_G[r^2\|s_{\theta^\star}\|^2]<\infty\) and \(\mathbb{E}_G[r\|H_{\theta^\star}\|]<\infty\). These guarantee the LLN/CLT used above. If these fail, importance weights have infinite variance and EDS collapses.

\item \textbf{Classifier learnability and calibration:} Using a classifier to estimate \(\eta(x)=\Pr(X\sim Q\mid x)\) gives ratios via \(\eta/(1-\eta)\) up to sampling-proportion constants. For reliable odds one needs well-calibrated probabilities: train with log-loss and (optionally) recalibrate (Platt/isotonic). For nonparametric classifiers use cross-fitting to relax Donsker-type requirements; see \citet{chernozhukov2018double}.

\item \textbf{Tail control / tempering:} If \(r\) is heavy-tailed so that \(\mathbb{E}_G[r^2]=\infty\), use clipping \(r\mapsto \min(r,\tau)\) or temper \(r^\alpha\) with \(\alpha\in(0,1)\) to ensure finite second moments and stabilize EDS. Note this alters the target distribution (the posterior then targets the KL projection of the modified target).
\end{enumerate}


\nocite{*}
\bibliographystyle{plainnat}
\bibliography{references}

\end{document}